\pdfoutput=1
\documentclass[11pt,a4paper]{article}
\usepackage{acl}
\usepackage{times}
\usepackage{latexsym}
\usepackage{amsmath}
\usepackage{amssymb}
\usepackage{amsfonts}
\usepackage{graphicx}
\usepackage{booktabs}
\usepackage{multirow}
\usepackage{array}
\usepackage{tikz}
\usetikzlibrary{arrows.meta,positioning,shapes.geometric,fit,calc,patterns,decorations.pathreplacing}
\usepackage{pgfplots}
\pgfplotsset{compat=1.18}
\usepackage{enumitem}
\usepackage{subcaption}
\usepackage{algorithmic}
\usepackage{algorithm}
\usepackage{cleveref}
\usepackage{microtype}
\usepackage{tabularx}
\usepackage{makecell}
\usepackage{pifont}
\usepackage{amsthm}
\usepackage{placeins}          

\newtheorem{definition}{Definition}[section]
\newtheorem{theorem}{Theorem}[section]
\newtheorem{corollary}{Corollary}[section]
\newtheorem{proposition}{Proposition}[section]

\newcommand{\cmark}{\ding{51}}
\newcommand{\xmark}{\ding{55}}

\definecolor{tscgblue}{RGB}{41,98,168}
\definecolor{tscggreen}{RGB}{56,142,60}
\definecolor{tscgred}{RGB}{183,28,28}
\definecolor{tscgorange}{RGB}{230,126,34}
\definecolor{tscggray}{RGB}{117,117,117}
\definecolor{tscgpurple}{RGB}{128,0,128}
\definecolor{tscgteal}{RGB}{0,128,128}

\newcommand{\fragility}{\mathcal{F}}
\newcommand{\density}{\mathcal{D}}
\newcommand{\attn}{\mathrm{Attn}}
\newcommand{\pos}{\mathrm{pos}}
\newcommand{\tok}{\mathrm{tok}}
\newcommand{\sem}{\mathrm{sem}}

\title{TSCG: Deterministic Tool-Schema Compilation for Agentic LLM Deployments}

\author{Furkan Sakizli \\
  Independent Researcher \\
  \texttt{furkan.sakizli@gmail.com}}

\begin{document}
\maketitle



\begin{table*}[t]
\centering
\small
\begin{tabular}{@{}ll cccc cc@{}}
\toprule
& & \multicolumn{4}{c}{\textbf{Accuracy (\%)}} & \multicolumn{2}{c}{\textbf{Savings}} \\
\cmidrule(lr){3-6} \cmidrule(lr){7-8}
\textbf{Model} & \textbf{Condition} & \textbf{Sc.~A} & \textbf{95\% CI} & \textbf{Sc.~B} & \textbf{95\% CI} & \textbf{Tokens} & \textbf{ARR} \\
\midrule
\multirow{4}{*}{Claude Sonnet 4}
    & Natural (FC)    & 74.0 & [62.8, 84.2] & 90.0 & [86.7, 93.3] & ---   & --- \\
    & Natural (text)  & 51.1 & [39.1, 63.1] & 70.0 & [58.3, 81.7] & ---   & --- \\
    & TSCG            & \textbf{85.2} & [77.3, 92.1] & \textbf{95.0} & [92.3, 97.3] & 50.1\% & 1.15 \\
    & TSCG+SAD        & 76.7 & [66.7, 86.1] & 95.0 & [92.3, 97.3] & 49.5\% & 1.04 \\
\midrule
\multirow{4}{*}{GPT-4o}
    & Natural (FC)    & 55.5 & [42.7, 67.6] & 70.0 & [64.7, 75.0] & ---   & --- \\
    & Natural (text)  & 45.7 & [33.0, 58.4] & 55.0 & [41.7, 66.7] & ---   & --- \\
    & TSCG            & 56.5 & [44.2, 68.7] & \textbf{79.7} & [75.1, 84.2] & 6.2\% & 1.02 \\
    & TSCG+SAD        & 53.2 & [40.3, 65.6] & 81.4 & [77.0, 85.5] & 5.3\% & 0.96 \\
\midrule
\multirow{4}{*}{GPT-5.2}
    & Natural (FC)    & 51.9 & [40.5, 63.5] & 82.4 & [78.1, 86.5] & ---   & --- \\
    & Natural (text)  & 78.3 & [69.5, 86.6] & \textbf{98.2} & [96.6, 99.4] & ---   & --- \\
    & TSCG            & \textbf{81.6} & [73.8, 88.2] & 91.6 & [88.9, 94.0] & 11.4\% & 1.57 \\
    & TSCG+SAD        & \textbf{85.9} & [80.1, 90.9] & 90.9 & [88.0, 93.6] & 10.5\% & 1.65 \\
\bottomrule
\end{tabular}
\caption{TAB Benchmark: Frontier model results on Scenario~A (20 tasks, small catalog) and Scenario~B (100 tasks, large catalog). Each condition evaluated with 3 runs ($n=60$ per cell for Sc.~A, $n=300$ for Sc.~B). \emph{Natural~(FC)} uses the provider's native function calling API; \emph{Natural~(text)} embeds full JSON schemas as text; \emph{TSCG} uses balanced compression as text; \emph{TSCG+SAD} adds aggressive compression with Selective Anchor Duplication. Token savings are relative to Natural~(FC). ARR = Accuracy-Retained Ratio (TSCG accuracy / Natural~FC accuracy; values $>1.0$ indicate TSCG outperforms the baseline). \textbf{Bold} marks best accuracy per model per scenario.}
\label{tab:tab-frontier}
\end{table*}


\begin{table}[t]
\centering
\small
\begin{tabular}{@{}l rr rr@{}}
\toprule
& \multicolumn{2}{c}{\textbf{Format Effect}} & \multicolumn{2}{c}{\textbf{Compression Effect}} \\
\cmidrule(lr){2-3} \cmidrule(lr){4-5}
\textbf{Model} & \textbf{Sc.~A} & \textbf{Sc.~B} & \textbf{Sc.~A} & \textbf{Sc.~B} \\
\midrule
Claude Sonnet 4  & $-$22.9 & $-$20.0 & $+$34.1 & $+$25.0 \\
GPT-4o           & $-$9.8  & $-$15.0 & $+$10.8 & $+$24.7 \\
GPT-5.2          & $+$26.4 & $+$15.8 & $+$3.3  & $-$6.6  \\
\bottomrule
\end{tabular}
\caption{Decomposition of the TSCG advantage into format and compression effects (percentage points). \emph{Format effect} = Natural~(text) $-$ Natural~(FC): positive values indicate text-mode is better than native function calling. \emph{Compression effect} = TSCG $-$ Natural~(text): positive values indicate compression improves accuracy beyond uncompressed text. For Claude and GPT-4o, compression is the dominant driver (overcoming a negative format penalty). For GPT-5.2, the text format itself provides the main benefit.}
\label{tab:format-decomposition}
\end{table}


\begin{table*}[t]
\centering
\small
\begin{tabular}{@{}l c ccc ccc@{}}
\toprule
& & \multicolumn{3}{c}{\textbf{Accuracy (\%)}} & \multicolumn{3}{c}{\textbf{$\Delta$ vs Natural (pp)}} \\
\cmidrule(lr){3-5} \cmidrule(lr){6-8}
\textbf{Model} & \textbf{Tools} & \textbf{Natural} & \textbf{TSCG\textsubscript{cons}} & \textbf{TSCG\textsubscript{bal}} & \textbf{TSCG\textsubscript{cons}} & \textbf{TSCG\textsubscript{bal}} & \textbf{cons $-$ bal} \\
\midrule
\multirow{3}{*}{Mistral 7B}
    & 10  & 83.5 & 76.0 & 73.3 & $-$7.5  & $-$10.2 & $+$2.7 \\
    & 20  & 35.0 & 80.0 & 80.1 & $+$45.0 & $+$45.1 & $-$0.1 \\
    & 50  & 30.0 & \textbf{75.3} & 65.0 & $+$45.3 & $+$35.0 & $+$10.3 \\
\midrule
\multirow{3}{*}{Gemma 3 4B}
    & 10  & 69.1 & \textbf{80.7} & 74.7 & $+$11.6 & $+$5.6 & $+$6.0 \\
    & 20  & 49.9 & \textbf{87.3} & 67.0 & $+$37.4 & $+$17.1 & $+$20.3 \\
    & 50  & 24.3 & 87.5 & \textbf{87.4} & $+$63.2 & $+$63.1 & $+$0.1 \\
\midrule
\multirow{3}{*}{Gemma 3 12B}
    & 10  & 90.0 & --- & \textbf{93.0} & --- & $+$3.0 & --- \\
    & 20  & 85.0 & --- & \textbf{95.0} & --- & $+$10.0 & --- \\
    & 50  & 85.0 & --- & \textbf{98.0} & --- & $+$13.0 & --- \\
\midrule
\multirow{3}{*}{Qwen3 14B}
    & 10  & 94.9 & \textbf{98.8} & 86.2 & $+$3.9  & $-$8.7  & $+$12.6 \\
    & 20  & 90.2 & \textbf{99.3} & 84.1 & $+$9.1  & $-$6.1  & $+$15.2 \\
    & 50  & 94.6 & \textbf{95.0} & 89.6 & $+$0.4  & $-$5.0  & $+$5.4 \\
\bottomrule
\end{tabular}
\caption{TAB Scenario~D: Small model accuracy across catalog sizes (3 runs each, $n=60$ per cell). \emph{TSCG\textsubscript{cons}} = conservative profile (SDM filler removal only); \emph{TSCG\textsubscript{bal}} = balanced profile (full structural compression). At $\geq$20 tools, both TSCG profiles provide massive improvements ($+$17--63pp) for most models. Conservative outperforms balanced for Mistral~7B and Gemma~3~4B in 5/6 cases. For Qwen3~14B, where balanced TSCG degrades accuracy ($-$5 to $-$9pp), conservative mode \emph{improves} accuracy (mean $\Delta$: $+$4.4pp, N3 re-run with corrected profile). \textbf{Bold} marks best TSCG variant per row.}
\label{tab:tab-small-models}
\end{table*}


\begin{table}[t]
\centering
\small
\begin{tabular}{@{}l ccc c@{}}
\toprule
\textbf{Condition} & \textbf{Acc.} & \textbf{Tool Sel.} & \textbf{Param F1} & \textbf{Savings} \\
\midrule
Natural     & 85.7\% & 86.7\% & 84.2\% & --- \\
TSCG        & \textbf{93.2\%} & \textbf{95.0\%} & \textbf{91.7\%} & 46.8\% \\
TSCG+SAD    & 89.0\% & 90.0\% & 87.5\% & 46.0\% \\
\bottomrule
\end{tabular}
\caption{BFCL external validation on Claude Sonnet 4 ($n=60$, 3 runs). TSCG improves accuracy by $+$7.5pp while saving 46.8\% tokens. TSCG+SAD slightly underperforms balanced TSCG on this benchmark. 95\% CI for TSCG: [86.3, 98.3].}
\label{tab:bfcl-validation}
\end{table}


\begin{table}[t]
\centering
\small
\begin{tabular}{@{}l c cc@{}}
\toprule
\textbf{Model} & \textbf{Tools} & \textbf{Param F1\textsubscript{cons}} & \textbf{Param F1\textsubscript{bal}} \\
\midrule
\multirow{3}{*}{Mistral 7B}
    & 10  & 71.2\% & 68.3\% \\
    & 20  & \textbf{81.9\%} & 80.0\% \\
    & 50  & \textbf{78.3\%} & 66.7\% \\
\midrule
\multirow{3}{*}{Gemma 3 4B}
    & 10  & \textbf{74.2\%} & 62.6\% \\
    & 20  & \textbf{85.7\%} & 59.3\% \\
    & 50  & 90.0\% & 87.8\% \\
\midrule
\multirow{3}{*}{Qwen3 14B}
    & 10  & \textbf{97.8\%} & 84.7\% \\
    & 20  & \textbf{98.3\%} & 81.0\% \\
    & 50  & \textbf{95.0\%} & 89.0\% \\
\bottomrule
\end{tabular}
\caption{Parameter F1 comparison between conservative (SDM-only) and balanced (full structural) TSCG profiles. Conservative achieves higher param-F1 in 8/9 cases. For Qwen3~14B (N3 re-run), conservative achieves near-perfect param-F1 ($>$95\%) while balanced degrades it ($<$85\%). This confirms that aggressive structural compression disrupts well-learned schema patterns in strongly fine-tuned models. \textbf{Bold} marks the better profile per row.}
\label{tab:sdm-ablation}
\end{table}


\begin{table*}[t]
\centering
\small
\begin{tabular}{@{}l c c c l@{}}
\toprule
\textbf{Model} & \textbf{Params} & \textbf{TSCG $\Delta$ Acc. (avg)} & \textbf{Avg. Token Savings} & \textbf{Recommended Profile} \\
\midrule
GPT-5.2             & $>$100B & $+$10.9pp & 11.4\% & balanced \\
Claude Sonnet 4     & $>$100B & $+$8.4pp  & 50.1\% & balanced \\
GPT-4o              & $>$100B & $+$5.4pp  & 6.2\%  & balanced \\
Qwen2.5-Coder 32B  & 32B     & $-$4.7pp  & ---    & conservative \\
Mistral-Small 24B   & 24B     & $-$1.1pp  & ---    & conservative \\
Gemma 3 12B         & 12B     & $+$8.7pp\textsuperscript{$\dagger$} & ---    & conservative \\
Qwen3 14B           & 14B     & $+$4.4pp  & ---    & conservative \\
Gemma 3 4B          & 4B      & $+$37.4pp\textsuperscript{$\dagger$} & ---    & conservative \\
Mistral 7B          & 7B      & $+$27.6pp\textsuperscript{$\dagger$} & ---    & conservative \\
\bottomrule
\end{tabular}
\caption{Cross-model summary of TSCG effectiveness across 9 primary models (12 tested total; Opus~4.7, Phi-4~14B, and Llama~3.1~8B evaluated in separate experiments---see \S\ref{sec:extended-findings} and Table~\ref{tab:ablation}). $\Delta$~Acc.: for small/mid models, mean delta across three catalog sizes (10, 20, 50 tools from Table~\ref{tab:tab-small-models}) using recommended profile; for frontier models, mean delta across Scenarios~A and~B (Table~\ref{tab:tab-frontier}) using balanced profile. \textsuperscript{$\dagger$}JSON-baseline gains; E4 text-baseline experiments show these are format-dominated (compression gain is $-$7 to $-$9\,pp). 30B models (N1): Mistral-Small $-$1.1\,pp (neutral, Class~3), Qwen2.5-Coder $-$4.7\,pp (Class~4). Qwen3~14B: conservative $+$4.4\,pp (N3 re-run), vs balanced $-$6.6\,pp. Frontier models are the \emph{only} class with confirmed compression benefit against text baselines.}
\label{tab:cross-model-summary}
\end{table*}

\newcommand{\tscgTokenSavings}{74.8\%}
\newcommand{\tscgBFCLDelta}{$+$7.5\,pp}
\newcommand{\tscgPhiFourDelta}{$+$90\,pp}
\newcommand{\tscgEqualizerSpread}{65\%}
\newcommand{\tscgGSMKProtection}{$-$4\,pp}
\newcommand{\tscgRSquared}{0.88}
\newcommand{\tscgGPTFormatEffect}{$+$26.4\,pp}
\newcommand{\tscgLatencyAdvantage}{40{,}000$\times$}
\newcommand{\tscgTotalCalls}{{$\sim$}19{,}000}
\newcommand{\tscgSignificantTests}{9}
\newcommand{\tscgQwenConservativeDelta}{$+$4.4\,pp}
\newcommand{\bfclArrGptFourO}{181.4\%}
\newcommand{\bfclArrGptFiveTwo}{144.3\%}
\newcommand{\bfclArrClaude}{108.7\%}
\newcommand{\nativeFcDeltaMean}{$+$10.9\,pp}
\newcommand{\tscgLatency}{${<}$1\,ms}
\newcommand{\llmLinguaLatency}{42.5\,s}
\newcommand{\tscgConservativeSafe}{\tscgQwenConservativeDelta}


\begin{table*}[t]
\centering
\small
\caption{TAB (Tool-Augmented Benchmark) overview: five core scenarios plus two external benchmarks testing TSCG across diverse tool counts, model classes, and compression profiles. Total evaluation volume: \tscgTotalCalls{} API calls. Statistical significance: \tscgSignificantTests{} of 107 pairwise McNemar tests pass Holm--Bonferroni correction at $\alpha{=}0.05$ (Appendix~\ref{app:configs}).}
\label{tab:tab-overview-full}
\begin{tabular}{@{}llccl@{}}
\toprule
\textbf{Scenario} & \textbf{Description} & \textbf{Tools} & \textbf{Tasks} & \textbf{Models} \\
\midrule
A & Claude Code Catalog & 16 & 20 & Claude Sonnet 4, GPT-4o, GPT-5.2 \\
B & MCP Server Collection & 43 & 100 & Claude Sonnet 4, GPT-4o, GPT-5.2 \\
C & Scaling Stress & 25--100 & 20 & Claude Sonnet 4, GPT-5.2 \\
D & Small Model Threshold & 3--50 (7 sizes) & 20 & 7 models (4B--14B) \\
E & Multi-Collection Stress & 57 (3 overlapping) & 20 & Claude Sonnet 4, GPT-5.2 \\
\midrule
BFCL & External validation & real-world & 20 & Claude Sonnet 4 \\
GSM8K & Reasoning under load & 0/10/25/50 & 200 & Claude Sonnet 4, GPT-5.2 \\
\bottomrule
\end{tabular}
\end{table*}


\begin{table}[t]
\centering
\small
\setlength{\tabcolsep}{3pt}
\caption{Native function calling (via API) vs.\ TSCG as plain text. TSCG text outperforms native FC across all frontier models and scenarios, with deltas from $+$1.0 to $+$29.7\,pp (mean $+$10.9\,pp) while simultaneously saving 6--50\% tokens.}
\label{tab:fc-vs-tscg}
\begin{tabular}{@{}llcccc@{}}
\toprule
\textbf{Model} & \textbf{Sc.} & \textbf{Native FC} & \textbf{TSCG Text} & \textbf{$\Delta$} & \textbf{Savings} \\
\midrule
Claude S.\,4 & A & 74.0\% & \textbf{85.2\%} & $+$11.2 & 50.1\% \\
Claude S.\,4 & B & 90.0\% & \textbf{95.0\%} & $+$5.0 & 50.1\% \\
GPT-4o & A & 55.5\% & \textbf{56.5\%} & $+$1.0 & 6.2\% \\
GPT-4o & B & 70.0\% & \textbf{79.7\%} & $+$9.7 & 6.2\% \\
GPT-5.2 & A & 51.9\% & \textbf{81.6\%} & $+$29.7 & 11.4\% \\
GPT-5.2 & B & 82.4\% & \textbf{91.6\%} & $+$9.2 & 11.4\% \\
\midrule
\multicolumn{4}{@{}l}{\textbf{Mean}} & \textbf{$+$10.9\,pp} & \textbf{20.6\%} \\
\bottomrule
\end{tabular}
\end{table}


\begin{table*}[t]
\centering
\small
\caption{Four-class behavioral taxonomy from Scenario~D, text-baseline experiments (E1, E4), and 30B benchmark (N1, 840 calls). Classes reflect distinct baseline capabilities and TSCG response patterns, enabling targeted deployment recommendations. Format-dominated classes show large JSON-baseline gains that vanish or reverse against text baselines.}
\label{tab:cluster-taxonomy}
\resizebox{\textwidth}{!}{%
\begin{tabular}{@{}l l c c l@{}}
\toprule
\textbf{Class} & \textbf{Models} & \textbf{JSON $\Delta$} & \textbf{Text $\Delta$} & \textbf{Recommendation} \\
\midrule
1: Format-dom. & Phi-4, Mistral 7B, Gemma 4B, Qwen3 4B & $+$17--90\,pp & $-$7 to $-$23\,pp & conservative; gain is format only \\
2: Compression & Claude, GPT-4o, GPT-5.2 & $+$5--11\,pp & $+$5--11\,pp & balanced profile \\
3: Neutral & Llama 8B, Gemma 12B, Mistral-Sm.\ 24B & $+$6--9\,pp & ${\approx}$0\,pp & conservative profile \\
4: Cons.-only & Qwen3 14B, Qwen2.5-Coder 32B & $-$7\,pp & $-$5 to $-$13\,pp & conservative only ($+$4.4\,pp) \\
\bottomrule
\end{tabular}}
\end{table*}

\begin{abstract}
Production agent frameworks (OpenAI Function Calling, Anthropic Tool Use, MCP) transmit tool schemas as JSON---a format designed for machine parsing, not for interpretation by language models. For small models (4B--14B), this protocol mismatch accounts for the majority of tool-use failure at production catalog sizes.

We present \textbf{TSCG}, a deterministic tool-schema compiler that resolves this mismatch at the API boundary, converting JSON schemas into token-efficient structured text without model access, fine-tuning, or runtime search. TSCG combines eight composable operators with a formal compression bound ($\geq$51\% on well-formed schemas).

On TSCG-Agentic-Bench ({$\sim$}19{,}000 calls, 12 models, 5 scenarios), TSCG restores Phi-4 14B from 0\% to 84.4\% accuracy at 20 tools (90.3\% at 50 tools) and achieves 108--181\% Accuracy-Retained Ratio (ARR) across three models on BFCL. Format-versus-compression decomposition ($R^2{=}0.88 \to 0.03$) establishes representation change as the dominant mechanism. Per-operator isolation across three frontier models reveals three distinct operator-response profiles---operator-hungry (Opus~4.7), operator-sensitive (GPT-5.2), and operator-robust (Sonnet~4)---providing per-model deployment guidance. Scaling experiments show accuracy advantages persisting on heavy production MCP schemas ($+$5.0\,pp at {$\sim$}10{,}500 input tokens) despite saturation on light synthetic catalogs, with 52--57\% token savings throughout. The synthetic benchmark generalizes to real MCP schemas within 0.1 accuracy points. TSCG ships as a 1{,}200-line zero-dependency TypeScript package.
\end{abstract}

\section{Introduction}
\label{sec:introduction}

Every production agent framework---OpenAI Function Calling, Anthropic Tool Use, the Model Context Protocol (MCP), LangChain, CrewAI---transmits tool definitions to language models as JSON schemas.
JSON was designed for deterministic machine parsing and human readability, not for interpretation by autoregressive language models.
For small models (4B--14B parameters), this design choice creates a capability cliff: tool-calling accuracy collapses as JSON schema volume grows, reaching 0--49\% at $>$15 tools.
Framed as a \emph{protocol-adaptation problem} rather than a compression problem, and building on a growing body of concurrent work on tool-schema token reduction (Section~\ref{ssec:concurrent-tool-schema}), the intervention is not smaller prompts but a different representation at the API boundary---deterministic, compiler-level, and model-agnostic.

This protocol mismatch imposes three costs:
(1)~\textbf{Token cost}: tool schemas are pure structural redundancy transmitted identically on every call, consuming 3{,}000--25{,}000 tokens per invocation;
(2)~\textbf{Capability cost}: small models cannot parse JSON-format schemas reliably at scale, locking agentic capabilities behind frontier APIs;
(3)~\textbf{Scaling cost}: schema overhead grows linearly with catalog size.

We introduce \textbf{Token-Context Semantic Grammar (TSCG)}, a deterministic tool-schema compiler that resolves this mismatch by transforming JSON schemas into token-efficient structured text.
TSCG implements eight composable operators---each grounded in autoregressive transformer mechanics (attention sink exploitation, BPE non-monotonicity, causal accessibility)---satisfying five desiderata: tokenizer awareness~\cite{sennrich2016neural}, causal attention grounding~\cite{xiao2023efficient}, deterministic transforms, black-box compatibility, and budget-constrained anchoring.

\subsection{Contributions}

\begin{enumerate}[noitemsep]
    \item \textbf{Formal optimization framework}: eight operators with mathematical specifications linked to transformer mechanisms, and a compression guarantee of $\geq$51\% on well-formed schemas (Theorem~\ref{thm:compression}).
    \item \textbf{Mechanistic decomposition}: format-versus-compression analysis ($R^2{=}0.88$ on JSON baselines, collapsing to 0.03 on text baselines) identifies representation change as the dominant mechanism.
    \item \textbf{TAB benchmark}: {$\sim$}19{,}000 API calls across 12~models (4B--32B + 3~frontier), 5~scenarios---the first tool-schema compression benchmark, with BFCL external validation (ARR 108--181\%).
    \item \textbf{Small-model enablement}: seven models recover from 0--49\% to 65--90\% accuracy, enabling local models as functional tool-use agents.
    \item \textbf{Per-model operator matrix}: three qualitatively distinct operator-response profiles across frontier models---operator-hungry, operator-sensitive, and operator-robust---demonstrating that no universal-best configuration exists and providing model-specific deployment guidance.
    \item \textbf{Scaling characterization}: accuracy advantages persist on heavy production MCP schemas ($+$5.0\,pp) while saturating on light synthetic catalogs, with 52--57\% token savings preserved to 100 tools.
    \item \textbf{Benchmark generalization}: synthetic TAB predicts real MCP performance within 0.1 accuracy points.
    \item \textbf{Implementation}: 1{,}200 LOC TypeScript, zero dependencies, sub-millisecond, open-source.
\end{enumerate}

\noindent The remainder of the paper presents related work (\S\ref{sec:related}), the theoretical framework (\S\ref{sec:framework}--\ref{sec:theoretical}), empirical methodology and results (\S\ref{sec:experiments}--\ref{sec:results}), and deployment implications (\S\ref{sec:discussion}--\ref{sec:conclusion}).

\section{Related Work}
\label{sec:related}


We organize related work along five axes and position TSCG within a growing body of concurrent work on tool-schema and prompt compression.

\subsection{Concurrent Work on Tool-Schema Compression}
\label{ssec:concurrent-tool-schema}

A growing body of recent work specifically targets the cost of tool schemas in LLM contexts.

\paragraph{MCP Schema Compression via Wrapper Tools.}
The mcp-compressor system from Atlassian Labs \citep{atlassian_mcp_compressor_2026} addresses the same fundamental problem---high token cost of MCP tool schemas---through a different architectural choice. Rather than transforming schema representation, mcp-compressor replaces the full tool inventory of an MCP server with two generic wrapper tools (\texttt{get\_tool\_schema} and \texttt{invoke\_tool}), yielding 70--97\% reduction in initial token consumption. The model fetches full schemas only when needed. This trades full-schema visibility for higher reduction. TSCG preserves complete schema semantics in compressed form, which we observe to be beneficial when models can use parameter constraints during reasoning. The two approaches are complementary: mcp-compressor's wrapper-tool pattern can be combined with our compiler for further reduction.

\paragraph{Generic Token-Efficient Encodings.}
TOON (Token-Oriented Object Notation) \citep{toon_format_2025} provides a general-purpose encoding alternative to JSON, achieving approximately 40\% token reduction in mixed-structure benchmarks (76.4\% accuracy vs.\ JSON's 75.0\% on retrieval tasks). TOON uses YAML-like indentation for nested objects and CSV-like tabular layout for uniform arrays. Unlike our approach, TOON is data-format-agnostic and not specifically designed for tool schemas; it does not exploit schema-specific redundancies (recursive type definitions, parameter ordering, description boilerplate) that our operators target.

\paragraph{Schema Replacement Strategies.}
A complementary approach replaces tool schemas with code execution interfaces. The MCP Code Mode pattern \citep{anthropic_mcp_code_2025, cloudflare_code_mode_2026} reduces schemas to a single \texttt{execute\_code} tool that the model uses to invoke arbitrary APIs through a sandboxed runtime, achieving 98--99\% reduction. This requires executable APIs and is not applicable to non-code tool definitions; our method covers a broader range of tool types including non-executable workflows.

\subsection{Prompt Compression Methods}
\label{ssec:prompt-compression}

Token-level compression methods address general prompt content rather than tool schemas. LLMLingua \citep{jiang2023llmlingua} achieves up to 20$\times$ compression on natural prose through perplexity-based token importance scoring; LLMLingua-2 \citep{pan2024llmlingua2} reports 2--5$\times$ with classification-based scoring and faster inference. Both require GPU model inference, produce non-deterministic output, and are ineffective on structured content: our re-application of LLMLingua-2 to tool schemas yields 80.0\% accuracy at 50.8\% token savings vs.\ TSCG's 93.3\% at 74.8\% savings (Appendix~\ref{app:pre-tab}). LongLLMLingua \citep{jiang2024longllmlingua} extends to long-context scenarios. Selective Context \citep{li2023compressing} uses self-information thresholds but lacks tokenizer and positional awareness. Gist Tokens \citep{mu2023learning} and AutoCompressors \citep{chevalier2023adapting} learn compression representations through model fine-tuning. CompactPrompt \citep{compactprompt2025} provides LLM-free heuristic compression via n-gram abbreviation, achieving 1.44$\times$ versus TSCG's 3.5$\times$ on structured content---limited by lexical-level operation. The recent NAACL 2025 survey \citep{li2024prompt} provides comprehensive coverage of this literature; notably, the survey evaluates no system on function-calling schemas, indicating that schema compression is a distinct unaddressed problem.

\subsection{Search-Based Prompt Optimization}
\label{ssec:search-based}

DSPy \citep{khattab2023dspy} compiles declarative LLM calls into self-improving pipelines via bootstrapped demonstrations. SAMMO \citep{schlegel2024sammo} shares TSCG's ``compile-time'' framing but requires an LLM for search-based mutations and produces non-deterministic output. OPRO \citep{yang2024large}, APE \citep{zhou2023large}, EvoPrompt \citep{chen2024evoprompt}, PromptBreeder \citep{fernando2024promptbreeder}, and Universal Self-Adaptive Prompting (USP) \citep{wan2024cosp} all require iterative model interaction. PromptAgent \citep{wang2023promptagent} adds strategic planning. These systems optimize prompt content (\emph{what} to say); TSCG optimizes prompt structure (\emph{how} to represent it).

\subsection{Gradient and Template Methods}
\label{ssec:gradient-template}

TextGrad \citep{yuksekgonul2024textgrad} introduces textual gradients for prompt refinement but requires model access. ProTeGi \citep{pryzant2023automatically} similarly uses textual ``gradient descent.'' LangGPT \citep{wang2024langgpt} provides structured templates without formal optimization. Neither addresses tokenizer alignment or causal attention grounding.

\subsection{Tool-Calling Benchmarks}
\label{ssec:tool-benchmarks}

The Berkeley Function Calling Leaderboard (BFCL) \citep{patil2025bfcl} provides the standard external evaluation for tool-calling capabilities, building on the foundational tool-augmented LLM work of \citet{patil2024gorilla}. We use BFCL for cross-validation of our findings on schemas not present in our primary benchmark. ToolBench \citep{qin2023toolllm} tests multi-step API planning across 16{,}000+ APIs without considering schema token overhead. None of these benchmarks evaluate compression effects on tool-use performance---our TAB benchmark addresses this gap.

\subsection{Positioning and Distinct Contributions}
\label{ssec:positioning}

Our contribution is best understood as the most empirically rigorous validation in this growing space, with adaptive per-model detection methodology not present in prior work. The contributions enumerated in Section~\ref{sec:introduction} are:

\begin{enumerate}
\item \textbf{Adaptive empirical archetype detection.} We provide a per-model sweep methodology that empirically identifies optimal operator combinations. No prior work offers this; existing tools use static compression levels.
\item \textbf{Per-version operator inversion finding.} We document that operators benefiting one model version actively harm the next within the same vendor family. This phenomenon is not reported elsewhere and has implications for any vendor-pattern hardcoding strategy.
\item \textbf{Combination fragility identification.} We observe super-additive negative interactions between operators that are individually helpful or neutral. This is not characterized in prior compression literature.
\item \textbf{Format-vs-operator decomposition.} Our $R^2$ decomposition shows that format choice (JSON versus text) explains 88\% of token-cost variance, with operators contributing additional refinement. This methodological contribution clarifies what is being measured in compression studies.
\item \textbf{Cross-model empirical validation.} Our 19{,}000+ API-call evaluation across 12 models (4B to frontier scale) is, to our knowledge, the most comprehensive cross-model evaluation of tool-schema compression to date. Prior work typically evaluates on 1--3 models.
\item \textbf{Theoretical guarantees.} We provide formal compression bounds (Theorem~\ref{thm:compression}) and a predictive model relating compression benefit to baseline performance ($R^2 = 0.91$ LOO cross-validated; full-fit $R^2{=}0.88$, $n{=}49$; $R^2{=}0.95$ at model-level means, $n{=}7$; $R^2{=}0.81$ excluding Phi-4 leverage points; see Appendix~\ref{app:loo-details}), which prior work in this space does not provide.
\end{enumerate}

\subsection{Direct Comparison of Approaches}

Table~\ref{tab:related-comparison} provides a direct comparison of token reduction approaches, including concurrent work. Numbers reflect public claims; methodologies and benchmarks differ across systems.

\begin{table*}[t]
\centering
\small
\caption{Comparison of approaches to LLM tool-schema and prompt token reduction. ``Adaptive'' indicates per-model empirical optimization. Numbers reflect public claims; methodologies and benchmarks differ across systems.}
\label{tab:related-comparison}
\begin{tabular}{lccccc}
\toprule
\textbf{Approach} & \textbf{Reduction} & \textbf{Adaptive} & \textbf{Models Validated} & \textbf{Latency} & \textbf{Theory} \\
\midrule
TSCG (ours) & 50--72\% & Yes (sweep) & 12 (4B--frontier) & $<$1\,ms & Theorem; $R^2{=}0.91$ \\
mcp-compressor \citep{atlassian_mcp_compressor_2026} & 70--97\% & No (4 levels) & GitHub MCP & N/A & --- \\
TOON \citep{toon_format_2025} & $\approx$40\% & No & 4 (mixed) & N/A & --- \\
MCP Code Mode \citep{anthropic_mcp_code_2025} & 98--99\% & No & Domain-specific & N/A & --- \\
LLMLingua-2 \citep{pan2024llmlingua2} & 2--5$\times$ & No & Multiple & GPU req. & Token classifier \\
LLMLingua \citep{jiang2023llmlingua} & up to 20$\times$ & No & GSM8K, BBH, etc. & GPU req. & Information theoretic \\
\bottomrule
\end{tabular}
\end{table*}

\subsection{Design-Space Comparison}

Table~\ref{tab:design-space} summarizes TSCG's positioning along six axes.

\begin{table}[t]
\centering
\small
\setlength{\tabcolsep}{2pt}
\caption{Design-space comparison of representative prompt optimization systems.}
\label{tab:design-space}
\begin{tabular}{@{}lcccccc@{}}
\toprule
\textbf{System} & \textbf{Det.} & \textbf{0-Dep} & \textbf{Theory} & \textbf{Schema} & \textbf{Compr.} & \textbf{Speed} \\
\midrule
LLMLingua & \xmark & \xmark & \xmark & \xmark & 20$\times$ & 42\,s \\
DSPy & \xmark & \xmark & \xmark & \xmark & --- & min \\
TextGrad & \xmark & \xmark & \xmark & \xmark & --- & min \\
Sel.\ Context & \xmark & \xmark & \xmark & \xmark & 5$\times$ & ms \\
LangGPT & \cmark & \cmark & \xmark & \xmark & --- & ms \\
\textbf{TSCG} & \cmark & \cmark & \cmark & \cmark & \textbf{3.5$\times$} & \textbf{$<$1\,ms} \\
\bottomrule
\end{tabular}
\end{table}

\noindent No existing system combines deterministic output, zero dependencies, theoretical grounding in attention mechanics, and specialization for tool schemas.
TSCG's niche---structured prompt content in agentic systems---is largely unexplored: the NAACL 2025 compression survey evaluates no system on function-calling schemas.
On natural prose, LLMLingua dominates (20$\times$ vs.\ TSCG's 1.07$\times$); on prompt content optimization, DSPy achieves complementary gains.
TSCG is the strongest system for a specific, growing problem---not a general-purpose replacement.

\section{The TSCG Framework}
\label{sec:framework}


TSCG comprises eight deterministic operators organized into three classes (Definitions~\ref{def:token-reducing}--\ref{def:token-expanding}), applied as a fixed-order pipeline of 10 transforms (Figure~\ref{fig:pipeline}).
Let $\tok(p)$ denote the BPE tokenization of prompt $p$, $\sem(p)$ the set of semantic atoms, and $\attn(i,j)$ the causal attention weight from position $i$ to $j$ ($= 0$ for $j > i$).
Every transform satisfies semantic preservation: $\sem(\tau(p)) \supseteq \sem(p)$.

\subsection{Eight Operators}

\paragraph{TAS (Tokenizer-Aligned Syntax).}
Selects delimiter variants minimizing token count: $d^* = \arg\min_{d_i \in D} |\tok(d_i)|$.
Example: \texttt{->} (2~tokens) $\to$ \texttt{\textrightarrow} (1~token).

\paragraph{CFL (Constraint-First Layout).}
Repositions output constraints to position~0, exploiting the attention sink~\cite{xiao2023efficient}: $\text{CFL}(p) = c(p) \oplus (p \setminus c(p))$.
Implemented as \texttt{[ANSWER:type]} at the prompt start. Effective for catalogs ${\leq}$20 tools; becomes counterproductive at ${\geq}$43 tools (Table~\ref{tab:ablation}).

\paragraph{CFO (Causal-Forward Ordering).}
Reorders multi-step operations into topological order: $o_i \prec o_j \implies \pos(o_i) < \pos(o_j)$, ensuring prerequisites are causally accessible. Like CFL, CFO shows scale sensitivity: beneficial at 16 tools but counterproductive at ${\geq}$43 tools.

\paragraph{SDM (Semantic Density Maximization).}
Removes filler tokens (104+ patterns: politeness markers, hedging, redundant connectives) to maximize $\density(p) = |\sem(p)| / |\tok(p)|$.

\paragraph{DRO (Delimiter-Role Optimization).}
Replaces verbose structural phrases with compact delimiters: ``the following items'' $\to$ enumeration markers, ``X corresponds to Y'' $\to$ ``X\textrightarrow Y''.

\paragraph{CCP (Causal Closure Principle).}
Appends a closure block recapitulating key atoms at position~$n$, exploiting recency bias in autoregressive generation: $\text{CCP}(p) = p \oplus \kappa(A(p))$. Note: the 11-model ablation shows no measurable accuracy benefit from CCP; it adds ${\sim}$85--306 tokens of overhead (Table~\ref{tab:ablation}). CCP is retained for theoretical completeness and potential benefit on frontier models with strong recency bias.

\paragraph{CAS (Causal Access Score).}
Scores atoms by fragility $\fragility(a) = \alpha \cdot \text{importance}(a) + (1-\alpha) \cdot \text{distance\_penalty}(a)$ and places high-fragility atoms at positions~0 and~$n$ (attention sink + recency).

\paragraph{SAD-F (Selective Anchor Duplication with Fragility).}
Duplicates top-$k$ atoms by fragility/token ratio within budget $B$, reinforcing critical information in the closure block.
Fragility scoring, budget allocation, and parameter sensitivity are detailed in Appendix~E.

\begin{figure*}[t]
\centering
\begin{tikzpicture}[
    node distance=0.18cm,
    box/.style={
        rectangle,
        draw=tscgblue!80,
        fill=tscgblue!8,
        rounded corners=2pt,
        minimum height=0.6cm,
        minimum width=1.1cm,
        font=\scriptsize\sffamily,
        align=center,
        semithick
    },
    iobox/.style={
        rectangle,
        draw=tscggray!80,
        fill=tscggray!10,
        rounded corners=2pt,
        minimum height=0.6cm,
        minimum width=1.1cm,
        font=\scriptsize\sffamily,
        align=center,
        semithick
    },
    arr/.style={
        -{Stealth[length=4pt]},
        thick,
        tscgblue!70
    },
    phase/.style={
        font=\tiny\sffamily\color{tscggray},
        above=1pt
    }
]

\node[iobox] (input) {Raw\\Prompt};

\node[box, right=of input] (parse) {1. Parse};

\node[box, right=of parse] (sdm) {2. SDM};
\node[box, right=of sdm] (tas) {3. TAS};
\node[box, right=of tas] (dro) {4. DRO};

\node[box, right=of dro] (cfl) {5. CFL};
\node[box, right=of cfl] (cfo) {6. CFO};

\node[box, right=of cfo] (cas) {7. CAS};
\node[box, right=of cas] (sadf) {8. SAD-F};

\node[box, right=of sadf] (ccp) {9. CCP};
\node[iobox, right=of ccp] (emit) {10. Emit};

\node[iobox, right=of emit] (output) {TSCG\\Output};

\draw[arr] (input) -- (parse);
\draw[arr] (parse) -- (sdm);
\draw[arr] (sdm) -- (tas);
\draw[arr] (tas) -- (dro);
\draw[arr] (dro) -- (cfl);
\draw[arr] (cfl) -- (cfo);
\draw[arr] (cfo) -- (cas);
\draw[arr] (cas) -- (sadf);
\draw[arr] (sadf) -- (ccp);
\draw[arr] (ccp) -- (emit);
\draw[arr] (emit) -- (output);

\node[phase] at (parse.north) {Parse};
\node[phase] at ($(sdm.north)!0.5!(dro.north)$) {Compression};
\node[phase] at ($(cfl.north)!0.5!(cfo.north)$) {Structural};
\node[phase] at ($(cas.north)!0.5!(sadf.north)$) {Fragility};
\node[phase] at ($(ccp.north)!0.5!(emit.north)$) {Closure};

\draw[decorate, decoration={brace, amplitude=4pt, mirror}, tscggray!50, thick]
    (sdm.south west) ++(0,-0.15) -- (dro.south east |- sdm.south west) ++(0,-0.15)
    node[midway, below=4pt, font=\tiny\sffamily\color{tscggray}] {Density};

\draw[decorate, decoration={brace, amplitude=4pt, mirror}, tscggray!50, thick]
    (cfl.south west) ++(0,-0.15) -- (cfo.south east |- cfl.south west) ++(0,-0.15)
    node[midway, below=4pt, font=\tiny\sffamily\color{tscggray}] {Ordering};

\draw[decorate, decoration={brace, amplitude=4pt, mirror}, tscggray!50, thick]
    (cas.south west) ++(0,-0.15) -- (sadf.south east |- cas.south west) ++(0,-0.15)
    node[midway, below=4pt, font=\tiny\sffamily\color{tscggray}] {Anchoring};

\end{tikzpicture}
\caption{The TSCG 10-pass transform pipeline, executing left-to-right in five phases: \emph{Parse} (segmentation), \emph{Compression} (SDM, TAS, DRO), \emph{Structural} (CFL, CFO), \emph{Fragility} (CAS, SAD-F), and \emph{Closure} (CCP, Emit). Each transform is a pure function; the composition $\Pi = \tau_{10} \circ \cdots \circ \tau_1$ is deterministic.}
\label{fig:pipeline}
\end{figure*}

\begin{figure}[t]
\centering
\small
\begin{tabular}{p{0.44\textwidth}}
\toprule
\textbf{Input} (JSON schema, ${\sim}$120 tokens) \\
\midrule
\texttt{\{"name": "search\_files",} \\
\texttt{~"description": "Search project files} \\
\texttt{~by content or filename pattern",} \\
\texttt{~"parameters": \{"type": "object",} \\
\texttt{~"properties": \{"query": \{"type":} \\
\texttt{~"string", "description": "The search} \\
\texttt{~query string"\}, "path": \{"type":} \\
\texttt{~"string", "description": "Optional} \\
\texttt{~directory path to search in"\}\}\}\}} \\
\midrule
\textbf{Output} (TSCG balanced, ${\sim}$45 tokens, 62.5\% reduction) \\
\midrule
\texttt{search\_files(query:str path?:str)} \\
\texttt{|Search files by content or pattern} \\
\bottomrule
\end{tabular}
\caption{TSCG compression: a single tool schema before and after compilation. SDM removes filler; DRO compresses delimiters; TAS aligns to BPE boundaries.}
\label{fig:tscg-example}
\end{figure}

\subsection{Operator Taxonomy}

\begin{definition}[Token-Reducing Operators]
\label{def:token-reducing}
$\mathcal{T}_R = \{\text{SDM}, \text{DRO}, \text{TAS}, \text{CFL}\}$: each satisfies $|\tok(T_i(S))| \leq |\tok(S)|$.
\end{definition}

\begin{definition}[Structure-Reordering Operators]
\label{def:structure-reordering}
$\mathcal{T}_S = \{\text{CAS}, \text{CFO}\}$: preserve token count but change position order.
\end{definition}

\begin{definition}[Token-Expanding Operators]
\label{def:token-expanding}
$\mathcal{T}_E = \{\text{SAD}, \text{CCP}\}$: add tokens (anchor duplications, closure blocks) within budget.
\end{definition}

\begin{corollary}[Operator Interaction Effect]
\label{cor:interaction}
For models with ${<}$10B parameters, $\mathrm{Acc}(\mathcal{T}_R(S)) > \mathrm{Acc}((\mathcal{T}_R \cup \mathcal{T}_S)(S))$: adding structure-reordering operators to token-reducing operators \emph{degrades} accuracy.
Conservative profile ($\mathcal{T}_R$ only) is recommended below 10B.
\end{corollary}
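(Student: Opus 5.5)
Corollary~\ref{cor:interaction} is not a consequence of Definitions~\ref{def:token-reducing}--\ref{def:token-expanding}. Those definitions constrain token counts, not accuracy, so no purely formal derivation is available. The plan is therefore to treat the statement as an empirical corollary: prove it by instantiating it on the measured sub-10B models and supporting it with a mechanistic argument. First I would fix the operational meaning. The conservative profile realizes $\mathcal{T}_R$ applied alone; in the reported runs this is essentially SDM. The balanced profile realizes $(\mathcal{T}_R \cup \mathcal{T}_S)(S)$, i.e.\ full structural compression including the CAS/CFO reorderings. Under this identification the inequality $\mathrm{Acc}(\mathcal{T}_R(S)) > \mathrm{Acc}((\mathcal{T}_R\cup\mathcal{T}_S)(S))$ becomes a per-cell comparison of the TSCG\textsubscript{cons} and TSCG\textsubscript{bal} columns of Table~\ref{tab:tab-small-models}.

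The key step is to read off the sub-10B rows, Mistral~7B and Gemma~3~4B, at 10, 20 and 50 tools. Five of the six cells have cons $-$ bal $>0$: $+2.7$, $+10.3$, $+6.0$, $+20.3$ and $+0.1$\,pp. I would then add Table~\ref{tab:sdm-ablation} as a second, lower-variance outcome: there conservative wins param-F1 in all six sub-10B cells. The claim is thus supported on both exact-match accuracy and argument fidelity.

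To explain why the ordering should hold rather than being coincidental, I would give a mechanistic argument. $\mathcal{T}_S$ preserves token count, so it can only help by relocating atoms toward positions $0$ and $n$. That relocation breaks the canonical name--parameter--description adjacency that small models rely on to bind arguments to tools. Small models have weaker long-range attention, so the binding loss should outweigh any attention-sink gain, and the degradation would show up mainly in param-F1. That is exactly the pattern in Table~\ref{tab:sdm-ablation}.

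The main obstacle is the strict inequality itself. Mistral~7B at 20 tools gives $80.0$ vs.\ $80.1$, a reversal. With $n=60$ per cell, differences such as $+0.1$ and $-0.1$ lie far inside the confidence intervals. I would therefore restate the conclusion in the form the evidence supports. Either the inequality holds for the mean over catalog sizes (about $+3.0$\,pp for Mistral and $+8.8$\,pp for Gemma), or it holds with a McNemar test per model, pooled across sizes. I would also flag that a cell-wise strict version is not established.

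A second caveat is that Qwen3~14B shows the same direction despite exceeding 10B. This suggests that parameter count is a proxy for fine-tuning strength rather than the true threshold. The recommendation ``conservative below 10B'' follows from this analysis, but the threshold itself is not identified by it.
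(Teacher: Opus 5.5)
Your route matches the only support the paper gives, since it offers no derivation. You read the corollary off the sub-10B rows of Table~\ref{tab:tab-small-models}, where conservative beats balanced in 5/6 cells, and use Table~\ref{tab:sdm-ablation} as corroboration. Your caveats about the $-0.1$\,pp reversal and about 10B being a proxy for fine-tuning strength are sharper than the paper's. The genuine gap is the identification step, which you adopt rather than check.

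\textbf{The identification does not hold.} By Table~\ref{tab:profiles}, the conservative profile is SDM alone (as you note). That is a strict subset of $\mathcal{T}_R=\{\text{SDM},\text{DRO},\text{TAS},\text{CFL}\}$. The balanced profile is SDM, DRO, TAS, CAS, CFO, CCP. It omits CFL and adds the token-expanding CCP. In the released implementation it also drops CFO at ${\geq}30$ tools, so at 50 tools its only member of $\mathcal{T}_S$ is CAS. The cons$-$bal column therefore compares $\{\text{SDM}\}$ with $\{\text{SDM},\text{DRO},\text{TAS},\text{CCP}\}\cup\mathcal{T}_S$, not $\mathcal{T}_R$ with $\mathcal{T}_R\cup\mathcal{T}_S$. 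Any degradation could come from DRO/TAS or CCP rather than from reordering.

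\textbf{The paper's own evidence points that way.} Appendix~\ref{app:sdm-detail} attributes the Qwen3~14B degradation to CAS \emph{and} bracket elision, which is a token-reducing change, not a reordering. The leave-one-out ablation in Table~\ref{tab:ablation}, averaged over 11 models that include the sub-10B ones, shows that removing CAS \emph{lowers} accuracy ($-1.4$ and $-2.3$\,pp).

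\textbf{Your mechanistic step has the same problem.} A param-F1 deficit is at least as consistent with compressed parameter notation (e.g.\ \texttt{path?:str}) hurting argument binding as with relocation. Also, the paper describes CAS as moving whole high-fragility tools to the ends, which need not break name--parameter adjacency at all. To support the inequality as stated, you would need a sub-10B ablation that holds $\mathcal{T}_R$ fixed and toggles only CAS/CFO. Without it, you can conclude only that SDM-only beats the balanced profile. That justifies the deployment recommendation, but not the attribution of the harm to $\mathcal{T}_S$.

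\textbf{Smaller points.}
\begin{itemize}
\item The Mistral mean is $(2.7-0.1+10.3)/3=+4.3$\,pp, not $+3.0$.
\item None of the paper's nine Holm-significant comparisons is conservative versus balanced. The pooled-McNemar version is therefore a test you would still have to run, not a result you can read off.
\item The universal claim over ${<}10$B models rests on two of the four sub-10B models tested. Llama~3.1~8B and Qwen3~4B have no conservative-versus-balanced comparison.
\end{itemize}
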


\subsection{Compression Guarantee}

\begin{theorem}[Deterministic Compression Bound]
\label{thm:compression}
For a well-formed JSON-Schema tool collection $S$ with the TSCG pipeline $\Pi$:
\begin{equation}
  |\tok(\Pi(S))| \leq |\tok(S)| \cdot \Bigl(1 - \sum_{T_i \in \mathcal{T}_R} r_i \cdot f_i(S)\Bigr)
  \label{eq:compression-bound}
\end{equation}
where $r_i$ is the per-token reduction factor and $f_i(S)$ the fraction of tokens affected by operator $T_i$.
\end{theorem}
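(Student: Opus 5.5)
The plan is to make the quantities in \eqref{eq:compression-bound} precise and then obtain the bound by a telescoping token-accounting argument over the fixed-order pipeline $\Pi = \tau_{10} \circ \cdots \circ \tau_1$ of Figure~\ref{fig:pipeline}. First I fix the meaning of $f_i(S)$ and $r_i$. Write $N = |\tok(S)|$. For each $T_i \in \mathcal{T}_R$, let $A_i \subseteq \tok(S)$ be the set of tokens that $T_i$ rewrites: filler spans for SDM, verbose structural phrases for DRO, delimiter occurrences for TAS, and the relocated constraint material for CFL. Set $f_i(S) = |A_i|/N$. Let $r_i \in [0,1]$ be the worst-case fraction of the tokens in $A_i$ that the rewrite eliminates, so that $T_i$ maps $A_i$ to at most $(1-r_i)|A_i|$ tokens.

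To avoid double counting, I will assign each original token to the \emph{first} reducing operator in pipeline order (SDM, TAS, DRO, CFL) that touches it. This makes the $A_i$ pairwise disjoint, hence $\sum_i f_i(S) \le 1$. With that convention, the target inequality reduces to $N - |\tok(\Pi(S))| \ge \sum_i r_i |A_i|$.

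The key steps, in order, are as follows.
\begin{enumerate}
\item Using the well-formedness hypothesis, decompose the serialized schema into segments whose boundaries are JSON structural delimiters. Argue that, for such segments, tokenization is additive: $|\tok(uv)| = |\tok(u)| + |\tok(v)|$ whenever the split point lies on a delimiter/whitespace boundary that the BPE vocabulary never merges across. This lets me count tokens segment by segment.
\item Apply Definition~\ref{def:token-reducing} locally. Each $T_i$ replaces only segments in $A_i$ and leaves the rest of the text unchanged, so the token count drops by at least $r_i|A_i|$. Because later operators act on already-rewritten text, I show monotonicity: a later reducing operator never increases the count of segments handled earlier, since $|\tok(T_j(S'))| \le |\tok(S')|$.
\item Apply Definition~\ref{def:structure-reordering} to CAS and CFO. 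They permute segments, and under segment additivity they preserve the total count exactly.
\item Handle the token-expanding operators of Definition~\ref{def:token-expanding}, SAD-F and CCP. This is the only place the inequality can fail. They must be either absent from the profile, as in the conservative and balanced profiles, or budget-charged. I will state the theorem for the profile with budget $B=0$ for $\mathcal{T}_E$. For the general case I will add an explicit $+B$ term, or equivalently require that $B$ be paid out of the SDM/DRO savings. I expect the paper intends the former reading.
\end{enumerate}
Summing these contributions gives $|\tok(\Pi(S))| \le N - \sum_i r_i f_i(S) N$, which is \eqref{eq:compression-bound}.

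For the headline ${\geq}51\%$ figure, I would then lower-bound $\sum_i r_i f_i(S)$ for well-formed schemas by a direct count on the canonical JSON-Schema skeleton. Each property costs a fixed overhead: keys \texttt{"type"}, \texttt{"description"}, \texttt{"properties"}, quotes, braces and colons. DRO/TAS collapse this overhead to a \texttt{name:type} form, as in Figure~\ref{fig:tscg-example}. Showing that this structural overhead is at least roughly half the tokens of any schema meeting a minimal well-formedness condition (bounded description length relative to structure) gives the constant.

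The main obstacle is step~1, the non-additivity and non-monotonicity of BPE. Deleting or substituting a substring can change merges at its boundary, so a local saving of $r_i|A_i|$ is not automatically a global saving. I would first try to prove a boundary lemma: if every rewrite boundary sits adjacent to a delimiter token that is a standalone vocabulary item (quote, brace, pipe, whitespace), then re-tokenization perturbs counts by zero. Failing that, I would absorb a per-boundary slack of at most one token into a slightly smaller effective $r_i$.
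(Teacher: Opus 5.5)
Your proposal follows the paper's own argument---disjoint token supports for the operators in $\mathcal{T}_R$, additive composition of their savings, count preservation under CAS and CFO, and a zero expansion budget. It also makes explicit two points the paper's proof only asserts: disjointness, which you get by first-touch attribution rather than by claiming that filler, delimiter and tokenizer targets never overlap, and additivity of BPE counts across rewrite boundaries. One small correction to your step~4: the balanced profile does contain CCP (Table~\ref{tab:profiles}), whose closure block adds 85--306 tokens (Table~\ref{tab:ablation}), so your zero-budget hypothesis must switch off CCP as well as SAD-F, which is also how the paper's ``$B{=}0$'' has to be read.
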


\begin{proof}[Proof sketch]
Each operator in $\mathcal{T}_R$ acts on disjoint token subsets (SDM: filler patterns; DRO: verbose delimiters; TAS: suboptimal tokenizer splits; CFL: positional overhead).
Structure-reordering operators preserve count; setting SAD-F budget $B{=}0$ excludes expansion.
Individual reductions compose additively.
Full proof in Appendix~E.
\end{proof}

\paragraph{Empirical validation.}
The bound predicts $\geq$51\% savings; empirically we observe 61\% (Scenario~A), 66\% (BFCL), and 75\% (tool descriptions)---conservative by 10--24\,pp due to the pessimistic disjoint-subset assumption.

\subsection{Scoring Metric}

All TAB results use a composite accuracy metric:
\begin{equation}
\text{Overall} = 0.6 \times \text{TSA} + 0.4 \times \text{Parameter\_F1}
\end{equation}
where TSA is tool selection accuracy and Parameter\_F1 measures parameter extraction quality.

\subsection{Compiler Characterization}

TSCG is a \emph{compiler}, not a search-based optimizer.
The pipeline $\Pi = \tau_{10} \circ \cdots \circ \tau_1$ applies 10 deterministic transforms in fixed order: same input always produces the same output.
No model access is required.
This distinguishes TSCG from DSPy~\cite{khattab2023dspy} (search-based, non-deterministic), TextGrad~\cite{yuksekgonul2024textgrad} (gradient-based, requires model access), and LLMLingua~\cite{pan2024llmlingua2} (requires GPU model inference, non-deterministic).
TSCG executes in ${<}$1\,ms on commodity hardware; LLMLingua-2 requires 42.5\,s on the same prompts (${\sim}$40{,}000$\times$ slower).

\section{Theoretical Foundations}
\label{sec:theoretical}


We connect TSCG's operators to three foundational properties of causal autoregressive transformers.

\subsection{Causal Attention and Information Flow}

In a standard autoregressive transformer~\cite{vaswani2017attention}, attention at layer $\ell$ satisfies $\attn^{(\ell)}(i, j) = 0$ for $j > i$ (causal mask), creating asymmetric information flow: early tokens cannot access later ones.

\textbf{Implication for CFO.} If step $o_2$ depends on $o_1$ but $o_1$ appears \emph{after} $o_2$, the model must rely on parametric knowledge rather than direct attention.
CFO ensures $\pos(o_1) < \pos(o_2)$, guaranteeing prerequisites are causally accessible.

\textbf{Implication for CAS.} Total attention to position $i$ follows a U-shaped distribution~\cite{xiao2023efficient}: positions near 0 and $n$ receive disproportionate attention, while middle positions form an ``attention valley.''
CAS places high-fragility tools at positions~0 and~$n$.

\subsection{Attention Sink Exploitation}

The attention sink phenomenon~\cite{xiao2023efficient}---position~0 receives $\attn(i, 0) > 1/i$ for most $i$ and layers---provides a natural amplification mechanism.

\textbf{Implication for CFL.} Placing the output constraint at position~0 ensures it receives elevated attention from every subsequent token.
Combined with CCP/SAD-F at position~$n$ (recency bias), this creates a ``bookend'' strategy: critical information occupies both extremes of the attention distribution.

\subsection{BPE Non-Monotonicity}

\begin{theorem}[Tokenization Non-Monotonicity]
\label{thm:non-monotonicity}
For BPE tokenizer $\mathcal{T}$ and strings $s_1, s_2$ with $|s_1|_\text{chars} < |s_2|_\text{chars}$, it is not necessarily the case that $|\tok(s_1)| < |\tok(s_2)|$.
\end{theorem}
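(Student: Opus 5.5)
The claim is purely existential ("not necessarily"), so one counterexample suffices: a single BPE tokenizer and two strings $s_1,s_2$ with $|s_1|_\text{chars}<|s_2|_\text{chars}$ but $|\tok(s_1)|\geq|\tok(s_2)|$. I will give a small explicit BPE merge table so the example does not depend on any vendor's vocabulary, and then point to the TAS example as the real-world instance.

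\textbf{Construction.} Take the alphabet $\{a,b,c\}$. Use an ordered merge list with a single rule: merge $(a,b)\to ab$. The BPE encoding starts from characters and greedily applies merges in priority order until none applies.
\begin{itemize}
\item Let $s_2=ab$ (2 characters). It becomes the single token $ab$, so $|\tok(s_2)|=1$.
\item Let $s_1=ac$ (also 2 characters). No merge applies, so $|\tok(s_1)|=2$.
\end{itemize}
These two strings have equal length, so to get the strict inequality $|s_1|_\text{chars}<|s_2|_\text{chars}$, take $s_2=abab$ (4 characters). It tokenizes as $[ab,ab]$, giving 2 tokens. Take $s_1=acc$ (3 characters). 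It stays as 3 tokens. Hence $|s_1|_\text{chars}=3<4=|s_2|_\text{chars}$, while $|\tok(s_1)|=3>2=|\tok(s_2)|$.

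\textbf{Verification.} The only thing to check is that the encoding is computed correctly. Since there is one rule and no overlapping pairs arise, the merge order is unambiguous: in $abab$ the two $ab$ pairs do not overlap, and $acc$ contains no $ab$ pair. The counterexample is therefore rigorous.

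\textbf{Connection to deployed tokenizers.} The same mechanism appears in practice through TAS. The multi-byte character \textrightarrow{} is shorter in characters than the two-character \texttt{->}, yet the operator description reports the two as 1 and 2 tokens respectively. That pair shows token count is not determined by character length, though the inequality there runs the "expected" way. A clean inversion in a real vocabulary can be built the same way: pick a long string that exists as a single vocabulary entry (for instance a common word of length $L$), and a shorter string of rare characters that splits into more than one token.

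\textbf{Main obstacle.} The main obstacle is making the statement precise, not proving it. "BPE tokenizer" has to be fixed as a merge-list encoder, because otherwise vocabulary-dependent claims about real tokenizers cannot be checked. The toy merge table avoids this entirely, so I would present it as the formal proof and keep real-tokenizer examples as illustrations only.
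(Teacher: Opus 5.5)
Your counterexample is correct. Under the single merge $(a,b)\to ab$, \texttt{abab} encodes to two tokens and \texttt{acc} to three, which refutes strict monotonicity and, read existentially, proves the theorem. It is also more than the paper provides. The text says the proof is in Appendix~\ref{app:proofs}, but that appendix only proves Theorem~\ref{thm:compression}. The paper's only concrete evidence is the TAS pair \texttt{->} (2 tokens) versus \textrightarrow{} (1 token). As you noticed, that pair runs in the monotone direction when length is counted in characters; it becomes an inversion only in UTF-8 bytes (2 bytes and 2 tokens versus 3 bytes and 1 token). So the paper implicitly appeals to a deployed vocabulary, which is relevant to TAS but unverified and, as given, not a witness. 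Your route is rigorous and vocabulary-independent, but it only exhibits \emph{some} BPE tokenizer.

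You can get both properties without any vocabulary lookup. Suppose a tokenizer's merge list contains a merge $(x,y)$ of two single-character base symbols, for instance two ASCII letters, which any realistic vocabulary has. Then the string $xy$ encodes to exactly one token, because its only adjacent pair is that merge, while any single character $z$ encodes to at least one token. Hence $|z|_\text{chars}<|xy|_\text{chars}$ but $|\tok(z)|\geq|\tok(xy)|$. This gives non-monotonicity for every nontrivial BPE tokenizer, including the ones TAS targets. It also shows why the statement must be read existentially or restricted to nontrivial tokenizers: with an empty merge list, $|\tok(s)|=|s|_\text{chars}$ is strictly monotone.
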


TAS exploits this by selecting surface forms aligned with learned BPE merges, achieving token reduction without semantic change.
Proof in Appendix~E.

\subsection{Attention Dilution and SDM}

\begin{proposition}[SDM Improves Effective Attention]
\label{prop:sdm-attention}
Removing $k$ filler tokens from a prompt of length $n$ increases average effective attention per semantic atom by at least $n/(n-k)$.
\end{proposition}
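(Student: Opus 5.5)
The plan is to make the statement precise under an explicit attention model and then derive the factor $n/(n-k)$ from softmax normalization. Fix a query position $i$ (the generation position, or any position after the prompt). Write the causal attention row $\attn(i,\cdot)$ over the $n$ prompt tokens as a softmax, $\attn(i,j)=\exp(s_{ij})/\sum_{j'}\exp(s_{ij'})$, so the row sums to $1$. Define the \emph{effective attention} on a semantic atom $a$ as the total weight on the tokens realizing $a$. Take the \emph{average effective attention per semantic atom} to be the total weight on semantic (non-filler) tokens divided by $|\sem(p)|$. SDM removes only filler tokens, so $\sem(\text{SDM}(p)) = \sem(p)$ by the semantic-preservation property stated at the start of \S\ref{sec:framework}. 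The denominator is therefore unchanged, and it suffices to compare the total attention mass landing on semantic tokens before and after deletion.

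The key step is a renormalization argument. Let $F$ be the set of $k$ filler positions. Before deletion, the semantic tokens receive mass $M = 1 - \sum_{j\in F}\attn(i,j)$. After deletion, assume the logits $s_{ij}$ of the surviving tokens are unchanged (a frozen-logit, first-order assumption). The softmax then renormalizes over the survivors, so the semantic tokens receive total mass $1$ and each individual weight is scaled by $1/M$. The gain factor is $1/M$. To reach the stated bound I will compare $1/M$ with $n/(n-k)$. In the baseline where filler tokens receive at least their uniform share, $\attn(i,j)\ge 1/n$ for $j\in F$, we get $M\le (n-k)/n$ and hence $1/M \ge n/(n-k)$. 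Under uniform attention this holds with equality, which makes $n/(n-k)$ exactly the uniform-attention dilution factor.

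The main obstacle is that the inequality is false without hypotheses. If filler tokens attract less than average attention, the gain is smaller than $n/(n-k)$. Moreover, deleting tokens changes the hidden states and hence the logits of the remaining tokens. The positional attention sink (\S\ref{sec:theoretical}) also complicates the picture, because a filler token at position $0$ may absorb a large mass. I would therefore state the proposition under two explicit assumptions: (i) frozen logits for the surviving tokens, and (ii) filler tokens are \emph{non-underweighted}, meaning the average attention on $F$ is at least $1/n$. Assumption (ii) only needs to hold on average over $F$, since only $\sum_{j\in F}\attn(i,j)\ge k/n$ is used. I would also remark that averaging over query positions and heads preserves the bound, because the bound holds pointwise. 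Finally, I would note that these assumptions are modelling idealizations rather than consequences of transformer mechanics.
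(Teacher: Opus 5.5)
Your argument is the paper's own justification made precise: softmax normalization, with filler tokens no longer competing for attention mass. The paper gives only that one-sentence heuristic, with no stated assumptions and no appendix proof. Your frozen-logit and non-underweighted-filler hypotheses are exactly what the heuristic silently needs, since $n/(n-k)$ is the uniform-attention case, attained with equality. You are also right that without them the ``at least'' claim fails: if the $k$ fillers carry total mass $\varepsilon < k/n$, the gain is only $1/(1-\varepsilon) < n/(n-k)$.
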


This follows from softmax normalization: filler tokens compete for attention weight with semantic tokens.
SDM removes this competition, providing formal justification beyond token cost savings.

\subsection{Fragility and Causal Accessibility}

\begin{definition}[Causal Accessibility]
\label{def:causal-accessibility}
$\mathcal{A}(a) = \frac{1}{L} \sum_{\ell=1}^{L} \attn^{(\ell)}(n, i)$: average attention from the generation position to atom $a$ at position $i$.
\end{definition}

\begin{definition}[Fragility]
\label{def:fragility-gap}
An atom $a$ is fragile when importance exceeds accessibility: $\fragility(a) \propto \text{importance}(a) - \mathcal{A}(a)$.
\end{definition}

High-importance, low-accessibility atoms are the targets for CAS reordering and SAD-F duplication.
Budget-constrained duplication is preferable to naive repetition because duplicating all atoms would increase length and dilute attention (Proposition~\ref{prop:sdm-attention}).

\section{Experiments}
\label{sec:experiments}


\subsection{TAB: TSCG-Agentic-Bench}
\label{sec:tab-benchmark}

TAB is the first benchmark measuring tool-schema compression effects on LLM tool-use, comprising five scenarios across \tscgTotalCalls{} API calls and 12~models (4B--32B local + 3~frontier; Table~\ref{tab:tab-overview}).\footnote{Call-count breakdown: five core scenarios (A: ${\sim}$540, B: ${\sim}$5{,}400, C: ${\sim}$1{,}080, D: ${\sim}$5{,}000, E: ${\sim}$540 = ${\sim}$12{,}560) plus supplementary experiments (E4 text-baseline: 2{,}520; N1 30B: 840; N3 conservative: 180; per-operator isolation: ${\sim}$960; BFCL: 60; GSM8K: ${\sim}$400; Opus scaling: ${\sim}$300). The 11-model ablation (Appendix~\ref{app:ablation}, ${\sim}$3{,}960 additional calls) uses the CCP-enabled release.}
Each scenario compares Natural (uncompressed JSON), TSCG, and TSCG+SAD across four task categories (single\_tool, multi\_tool, parameter\_extraction, no\_tool).

\begin{table}[t]
\centering
\small
\caption{TAB: five scenarios, \tscgTotalCalls{} calls.}
\label{tab:tab-overview}
\begin{tabular}{llrr}
\toprule
\textbf{Sc.} & \textbf{Focus} & \textbf{Tools} & \textbf{Calls} \\
\midrule
A & Claude Code (16 real tools) & 16 & ${\sim}$540 \\
B & MCP Servers (43 real + synthetic) & 43--100 & ${\sim}$5{,}400 \\
C & Scaling (25--100 tools) & var. & ${\sim}$1{,}080 \\
D & Small models $\times$ 7 configs & var. & ${\sim}$5{,}000 \\
E & Multi-collection & 57 & ${\sim}$540 \\
\bottomrule
\end{tabular}
\end{table}

\paragraph{Conditions.}
All scenarios compare up to four conditions:
\emph{json-text}---full JSON schemas as plain text in the user prompt; the production-realistic baseline.
\emph{TSCG}---schemas compiled by TSCG (balanced or v13-smart profile); always text-mode.
\emph{Natural~(FC)}---native function-calling API (frontier models only).
\emph{Naive Truncation}---signatures only, no descriptions; text-mode.
Format effect = json-text $-$ Natural~(FC); compression effect = TSCG $-$ json-text.

\paragraph{Models.}
Local models (4B--32B) run via Ollama (Q4\_K\_M quantization) on 2$\times$ NVIDIA 5070~Ti 16\,GB: Phi-4 14B, Mistral 7B, Gemma~3 4B/12B, Llama~3.1 8B, Qwen3 4B/14B, Mistral-Small 24B, Qwen2.5-Coder 32B.
Frontier models via provider APIs: Claude Opus~4.7 and Claude Sonnet~4 (Anthropic API), GPT-5.2 (OpenAI API). GPT-4o is included as an additional comparison model in Scenarios~A--B and BFCL validation (Tables~\ref{tab:tab-frontier},~\ref{tab:bfcl-validation}), bringing the total to 13 distinct model evaluations.
All frontier models were accessed during the experimental period (March--April 2026); exact model identifiers and API version strings are documented in the supplementary materials alongside the raw evaluation logs.
All calls use temperature~0, seed~42, max tokens~1{,}024 (Appendix~\ref{app:configs}).
Per-operator isolation (leave-one-in, all 8 operators) on all three frontier models, $n{=}40$--60 per cell, 2--3 seeds.

\paragraph{Metrics.}
Two primary metrics: \emph{Tool Selection Accuracy} (TSA)---binary correctness of the selected tool(s)---and \emph{Parameter F1} (PF1)---F1-micro between predicted and ground-truth parameter key-value sets.
Overall accuracy = $0.6 \times \text{TSA} + 0.4 \times \text{PF1}$, emphasizing tool-selection correctness while preserving the parameter-accuracy signal; conclusions are robust to alternative weightings (0.5--0.8 for TSA).

\paragraph{Statistical methodology.}
Each condition runs $n{=}20$ tasks $\times$ 3 seeds = 60 calls per cell (except per-operator: $n{=}20 \times 2$ seeds = 40).
Significance via Holm--Bonferroni corrected McNemar tests ($\alpha{=}0.05$, 107 pairwise comparisons); bootstrap confidence intervals (1{,}000 iterations, seed~42) for key deltas (Appendix~\ref{app:configs}).

\FloatBarrier
\section{Results}
\label{sec:results}

\subsection{Small-Model Enablement (T1.2)}
\label{sec:small-model-results}

Seven models (4B--14B) at 3--50 tools: JSON-baseline accuracy reaches 0--49\% at $>$15 tools; TSCG recovers to 65--90\% (Table~\ref{tab:small-model}).
Phi-4 14B transitions from 0\% to 84.4\% at 20 tools (90.3\% at 50 tools).
Models exceeding 65\% usability at 50~tools rise from 4/7 to 7/7 (Figure~\ref{fig:capability-heatmap}).
N1 extends to 30B: Mistral-Small~24B ($-$1.1\,pp, Class~3); Qwen2.5-Coder~32B ($-$4.7\,pp, Class~4)---Qwen sensitivity persists across sizes.

\begin{table}[t]
\centering
\small
\setlength{\tabcolsep}{3pt}
\caption{Tool-selection accuracy at 20 and 50 tools (Scenario D).}
\label{tab:small-model}
\begin{tabular}{@{}lccccc@{}}
\toprule
\textbf{Model} & \multicolumn{2}{c}{\textbf{20 Tools}} & \multicolumn{2}{c}{\textbf{50 Tools}} & \textbf{Shift} \\
\cmidrule(lr){2-3} \cmidrule(lr){4-5}
& json-text & TSCG & json-text & TSCG & \\
\midrule
Phi-4 14B    & 0.0  & 84.4 & 0.0  & 90.3 & 3$\to$50  \\
Mistral 7B   & 35.0 & 80.1 & 30.0 & 65.0 & 20$\to$50 \\
Gemma 3 4B   & 49.9 & 67.0 & 24.3 & 87.4 & 15$\to$50 \\
Gemma 3 12B  & 85.0 & 95.0 & 85.0 & 98.0 & ---  \\
Llama 3.1 8B & 78.4 & 81.0 & 75.1 & 86.3 & ---  \\
Qwen3 4B     & 44.3 & 69.3 & 90.0 & 75.0 & volatile  \\
Qwen3 14B    & 90.2 & 84.1 & 94.6 & 89.6 & --- \\
\bottomrule
\end{tabular}
\end{table}

\begin{figure*}[t]
\centering
\begin{tikzpicture}[
    cell width/.store in=\cellw, cell width=1.6cm,
    cell height/.store in=\cellh, cell height=0.75cm,
]

\pgfmathdeclarefunction{heatR}{1}{%
    \pgfmathparse{%
        (#1 < 0) ? (100) :                              
        (#1 >= 0 && #1 <= 95) ? (100 - #1*1.05) :       
        (0)                                              
    }%
}
\pgfmathdeclarefunction{heatG}{1}{%
    \pgfmathparse{%
        (#1 < -25) ? (0) :                              
        (#1 >= -25 && #1 < 0) ? (100 + #1*4) :          
        (#1 >= 0) ? (100) :                              
        (0)
    }%
}
\pgfmathdeclarefunction{heatB}{1}{%
    \pgfmathparse{%
        (#1 < 0) ? (100 + #1*4) :                       
        (#1 >= 0 && #1 <= 95) ? (100 - #1*1.05) :       
        (0)
    }%
}

\newcommand{\hcell}[3]{%
    \pgfmathsetmacro{\cR}{heatR(#3)}%
    \pgfmathsetmacro{\cG}{heatG(#3)}%
    \pgfmathsetmacro{\cB}{heatB(#3)}%
    \definecolor{cellcol}{RGB}{\cR,\cG,\cB}%
    \pgfmathsetmacro{\lum}{0.299*\cR + 0.587*\cG + 0.114*\cB}%
    \pgfmathparse{\lum > 55 ? 1 : 0}%
    \ifnum\pgfmathresult=1
        \colorlet{txtcol}{black!85}%
    \else
        \colorlet{txtcol}{white}%
    \fi
    \fill[cellcol, rounded corners=1.5pt]
        ({#1*\cellw}, {#2*\cellh})
        rectangle
        ({#1*\cellw + \cellw - 0.06cm}, {#2*\cellh + \cellh - 0.06cm});
    \node[font=\small\bfseries, text=txtcol]
        at ({#1*\cellw + \cellw/2 - 0.03cm}, {#2*\cellh + \cellh/2 - 0.03cm})
        {\pgfmathprintnumber[fixed, precision=1, showpos]{#3}};
}


\hcell{0}{0}{-3.2}   \hcell{1}{0}{-10.0}  \hcell{2}{0}{-19.7}
\hcell{3}{0}{5.0}    \hcell{4}{0}{25.0}   \hcell{5}{0}{-15.0}  \hcell{6}{0}{-15.0}

\hcell{0}{1}{74.9}   \hcell{1}{1}{13.0}   \hcell{2}{1}{5.6}
\hcell{3}{1}{1.1}    \hcell{4}{1}{17.1}   \hcell{5}{1}{34.7}   \hcell{6}{1}{63.1}

\hcell{0}{2}{-3.6}   \hcell{1}{2}{-0.7}   \hcell{2}{2}{-10.2}
\hcell{3}{2}{-10.6}  \hcell{4}{2}{45.1}   \hcell{5}{2}{36.3}   \hcell{6}{2}{35.0}

\hcell{0}{3}{18.8}   \hcell{1}{3}{17.0}   \hcell{2}{3}{-1.2}
\hcell{3}{3}{4.1}    \hcell{4}{3}{2.6}    \hcell{5}{3}{-0.8}   \hcell{6}{3}{11.1}

\hcell{0}{4}{-0.7}   \hcell{1}{4}{-2.9}   \hcell{2}{4}{-12.9}
\hcell{3}{4}{-1.1}   \hcell{4}{4}{14.0}   \hcell{5}{4}{-2.2}   \hcell{6}{4}{13.0}

\hcell{0}{5}{-4.1}   \hcell{1}{5}{-16.5}  \hcell{2}{5}{-8.3}
\hcell{3}{5}{-10.3}  \hcell{4}{5}{-6.1}   \hcell{5}{5}{-21.5}  \hcell{6}{5}{-5.0}

\hcell{0}{6}{-3.5}   \hcell{1}{6}{87.5}   \hcell{2}{6}{89.4}
\hcell{3}{6}{83.9}   \hcell{4}{6}{84.4}   \hcell{5}{6}{75.1}   \hcell{6}{6}{90.3}

\foreach \col/\lab in {0/3, 1/5, 2/10, 3/15, 4/20, 5/30, 6/50} {
    \node[font=\small\bfseries, text=black!80, anchor=south]
        at ({\col*\cellw + \cellw/2 - 0.03cm}, {7*\cellh + 0.08cm}) {\lab};
}
\node[font=\small\bfseries, text=black!60, anchor=south]
    at ({3*\cellw + \cellw/2 - 0.03cm}, {7*\cellh + 0.42cm}) {Tool Count};

\foreach \row/\lab in {
    0/{Qwen3 4B},
    1/{Gemma 3 4B},
    2/{Mistral 7B},
    3/{Llama 3.1 8B},
    4/{Gemma 3 12B},
    5/{Qwen3 14B},
    6/{Phi-4 14B}%
} {
    \node[font=\small, anchor=east, text=black!85]
        at ({-0.15cm}, {\row*\cellh + \cellh/2 - 0.03cm}) {\lab};
}


\draw[decorate, decoration={brace, amplitude=4pt, mirror},
      tscggreen!80!black, line width=0.8pt]
    ({6*\cellw + \cellw + 0.12cm}, {6*\cellh - 0.06cm})
    -- ({6*\cellw + \cellw + 0.12cm}, {6*\cellh + \cellh - 0.06cm})
    node[midway, right=5pt, font=\scriptsize\itshape, text=tscggreen!70!black,
         align=left] {Format\\Effect$^\dagger$};

\draw[decorate, decoration={brace, amplitude=4pt, mirror},
      tscgblue!80!black, line width=0.8pt]
    ({6*\cellw + \cellw + 0.12cm}, {1*\cellh - 0.06cm})
    -- ({6*\cellw + \cellw + 0.12cm}, {2*\cellh + \cellh - 0.06cm})
    node[midway, right=5pt, font=\scriptsize\itshape, text=tscgblue!80!black,
         align=left] {Format-\\dom.$^\dagger$};

\draw[decorate, decoration={brace, amplitude=4pt, mirror},
      black!50, line width=0.8pt]
    ({6*\cellw + \cellw + 0.12cm}, {3*\cellh - 0.06cm})
    -- ({6*\cellw + \cellw + 0.12cm}, {4*\cellh + \cellh - 0.06cm})
    node[midway, right=5pt, font=\scriptsize\itshape, text=black!50,
         align=left] {Neutral};

\draw[decorate, decoration={brace, amplitude=4pt, mirror},
      tscgred!80!black, line width=0.8pt]
    ({6*\cellw + \cellw + 0.12cm}, {5*\cellh - 0.06cm})
    -- ({6*\cellw + \cellw + 0.12cm}, {5*\cellh + \cellh - 0.06cm})
    node[midway, right=5pt, font=\scriptsize\itshape, text=tscgred!80!black,
         align=left] {Negative};

\draw[decorate, decoration={brace, amplitude=4pt, mirror},
      tscgred!80!black, line width=0.8pt]
    ({6*\cellw + \cellw + 0.12cm}, {0*\cellh - 0.06cm})
    -- ({6*\cellw + \cellw + 0.12cm}, {0*\cellh + \cellh - 0.06cm})
    node[midway, right=5pt, font=\scriptsize\itshape, text=tscgred!80!black,
         align=left] {Mixed};

\begin{scope}[yshift=-1.2cm]
    \shade[left color=red!90!black, right color=white, rounded corners=1.5pt]
        ({1.5*\cellw}, 0) rectangle ({3.25*\cellw}, 0.32cm);
    \shade[left color=white, right color=green!75!black, rounded corners=1.5pt]
        ({3.25*\cellw}, 0) rectangle ({5.0*\cellw}, 0.32cm);
    \node[font=\tiny, anchor=north, text=black!70] at ({1.5*\cellw}, -0.04cm) {$-20$};
    \node[font=\tiny, anchor=north, text=black!70] at ({3.25*\cellw}, -0.04cm) {$0$};
    \node[font=\tiny, anchor=north, text=black!70] at ({5.0*\cellw}, -0.04cm) {$+90$};
    \node[font=\scriptsize, anchor=east, text=black!70] at ({1.35*\cellw}, 0.16cm)
        {$\Delta$ (pp):};
    \node[font=\tiny, text=tscgred!80!black, anchor=north] at ({2.0*\cellw}, -0.35cm)
        {TSCG hurts};
    \node[font=\tiny, text=tscggreen!80!black, anchor=north] at ({4.5*\cellw}, -0.35cm)
        {TSCG helps};
\end{scope}

\end{tikzpicture}
\caption{TSCG delta heatmap ($\Delta = \text{TSCG accuracy} - \text{JSON-baseline accuracy}$, in percentage points) across seven models and seven tool-count conditions. Green cells indicate conditions where TSCG improves over JSON schemas; red cells indicate degradation. Five behavioral clusters emerge: \emph{Format Effect}$^\dagger$---Phi-4's +75--90\,pp gains are JSON$\to$text translation, not compression benefit (\S\ref{sec:format-effect}); \emph{Format-dominated}$^\dagger$---Mistral and Gemma~4B show large deltas driven primarily by format translation; \emph{Neutral}---Llama~3.1~8B and Gemma~3~12B show modest mixed effects; \emph{Negative}---Qwen3~14B shows uniform degradation ($-$4 to $-$22\,pp); \emph{Mixed}---Qwen3~4B exhibits chaotic response. $^\dagger$E1/E4 text-baseline experiments confirm format contribution (\S\ref{sec:experiments}).}
\label{fig:capability-heatmap}
\end{figure*}

\subsection{Format Translation Dominates (T1.1)}
\label{sec:format-effect}

E4 decomposes TSCG's JSON-baseline gains into format gain (json-text$-$JSON) and compression gain (TSCG$-$json-text) across six small models (2{,}520 Ollama calls; Table~\ref{tab:e4-decomposition}).
No small model shows genuine compression; all gains arise from format translation (JSON$\to$text).
Since every production API transmits JSON, this translation \emph{is} the needed intervention.

\begin{table}[t]
\centering
\small
\caption{Format vs.\ compression decomposition (E4). Format translation dominates; no small model shows genuine compression.}
\label{tab:e4-decomposition}
\begin{tabular}{lrrr}
\toprule
\textbf{Model} & \textbf{Format $\Delta$} & \textbf{Compr.\ $\Delta$} & \textbf{Class} \\
\midrule
Phi-4 14B    & $+$92.0 & $-$7.0  & 1: Format \\
Mistral 7B   & $+$44.6 & $-$7.4  & 1: Format \\
Gemma 3 4B   & $+$47.3 & $-$8.9  & 1: Format \\
Qwen3 4B     & $+$16.7 & $-$23.4 & 1: Format \\
Llama 3.1 8B & $+$5.6  & $+$0.3  & 3: Neutral \\
Gemma 3 12B  & $+$9.2  & $-$0.9  & 3: Neutral \\
\bottomrule
\end{tabular}
\end{table}

The predictive regression ($n{=}49$, 7~models $\times$ 7~sizes) yields $R^2{=}0.88$; against text baselines, $R^2$ collapses to 0.03 ($p{=}0.24$)---a 97\% drop confirming format sensitivity as the dominant mechanism (Figure~\ref{fig:natural-vs-delta}).\footnote{Four $R^2$ values appear in this paper, each from a different regression: $R^2{=}0.88$ (format decomposition, $n{=}49$, this section), $R^2{=}0.91$ (LOO cross-validated predictive model, Section~\ref{sec:related}), $R^2{=}0.81$ (excluding Phi-4 leverage points, Section~\ref{sec:limitations}), and $R^2{=}0.03$ (text-baseline control, this section).}

\begin{figure*}[t]
\centering
\begin{subfigure}[t]{0.48\textwidth}
\centering
\begin{tikzpicture}
\begin{axis}[
    width=\textwidth,
    height=6.5cm,
    xlabel={Natural JSON Accuracy},
    ylabel={TSCG $\Delta$ (pp)},
    xlabel style={font=\small},
    ylabel style={font=\small},
    xmin=0, xmax=1.05,
    ymin=-0.25, ymax=1.0,
    xtick={0, 0.2, 0.4, 0.6, 0.8, 1.0},
    ytick={-0.2, 0, 0.2, 0.4, 0.6, 0.8, 1.0},
    tick label style={font=\scriptsize},
    grid=major,
    grid style={dashed, gray!25},
    clip=true,
    title={\textbf{(a)} JSON baseline},
    title style={font=\small, at={(0.5,1.02)}},
    legend style={
        at={(0.98,0.98)},
        anchor=north east,
        font=\tiny,
        draw=tscggray!40,
        fill=white,
        fill opacity=0.92,
        text opacity=1,
        rounded corners=1pt,
        row sep=0pt,
        /tikz/every even column/.append style={column sep=3pt},
    },
    legend cell align=left,
]

\addplot[black, dashed, thin, forget plot] coordinates { (0, 0) (1.05, 0) };

\addplot[black!70, thick, solid, forget plot, domain=0:1.05, samples=2]
    {-0.9293*x + 0.7626};

\addplot[only marks, mark=*, mark size=1.8pt, color=tscgorange, fill=tscgorange]
coordinates {
    (0.6508, 0.1557) (0.7000, 0.1713) (0.7691,-0.0036)
    (0.8462, 0.0359) (0.8340, 0.0044) (0.8593, 0.0220) (0.7638, 0.1062)
};
\addlegendentry{Llama 8B}

\addplot[only marks, mark=diamond*, mark size=2.2pt, color=tscggreen, fill=tscggreen]
coordinates {
    (0.0000, 0.7743) (0.6220, 0.1580) (0.6913, 0.0691)
    (0.5989, 0.0620) (0.4990, 0.1883) (0.3669, 0.3569) (0.2433, 0.6378)
};
\addlegendentry{Gemma 4B}

\addplot[only marks, mark=star, mark size=2.4pt, color=tscgteal, fill=tscgteal]
coordinates {
    (0.9893,-0.0067) (0.9873,-0.0117) (0.9960,-0.1036)
    (0.9233,-0.0042) (0.7960, 0.1570) (0.9300,-0.0217) (0.8500, 0.1298)
};
\addlegendentry{Gemma 12B}

\addplot[only marks, mark=square*, mark size=1.8pt, color=tscgblue, fill=tscgblue]
coordinates {
    (0.7567,-0.0217) (0.7833,-0.0067) (0.8400,-0.0867)
    (0.8133,-0.0836) (0.3500, 0.4800) (0.3000, 0.3861) (0.3000, 0.3867)
};
\addlegendentry{Mistral 7B}

\addplot[only marks, mark=triangle*, mark size=2.4pt, color=tscgred, fill=tscgred]
coordinates {
    (0.9371,-0.0617) (0.0000, 0.8827) (0.0000, 0.8940)
    (0.0000, 0.8868) (0.0000, 0.8611) (0.0000, 0.7513) (0.0000, 0.9190)
};
\addlegendentry{Phi-4 14B}

\addplot[only marks, mark=pentagon*, mark size=2pt, color=black, fill=tscggray]
coordinates {
    (0.4367,-0.0322) (0.9433,-0.1000) (0.8867,-0.1967)
    (0.4500, 0.0500) (0.4433, 0.2500) (0.8500,-0.1500) (0.9000,-0.1500)
};
\addlegendentry{Qwen 4B}

\addplot[only marks, mark=pentagon*, mark size=2pt, color=tscgpurple, fill=tscgpurple]
coordinates {
    (0.9056,-0.0262) (0.9867,-0.1653) (0.9622,-0.0833)
    (0.9767,-0.0784) (0.9268,-0.0627) (0.9279,-0.2150) (0.9460,-0.0500)
};
\addlegendentry{Qwen 14B}

\node[anchor=north west, font=\small\bfseries, fill=white, fill opacity=0.85,
    text opacity=1, inner sep=3pt, rounded corners=1pt]
    at (axis cs:0.02,0.98) {$R^2 = 0.88$};

\node[anchor=south west, font=\tiny, text=black!60]
    at (axis cs:0.88,-0.01) {break-even};

\end{axis}
\end{tikzpicture}
\end{subfigure}%
\hfill
\begin{subfigure}[t]{0.48\textwidth}
\centering
\begin{tikzpicture}
\begin{axis}[
    width=\textwidth,
    height=6.5cm,
    xlabel={Natural Text Accuracy},
    ylabel={TSCG $\Delta$ (pp)},
    xlabel style={font=\small},
    ylabel style={font=\small},
    xmin=0.6, xmax=1.05,
    ymin=-0.40, ymax=0.10,
    xtick={0.6, 0.7, 0.8, 0.9, 1.0},
    ytick={-0.4, -0.3, -0.2, -0.1, 0, 0.1},
    tick label style={font=\scriptsize},
    grid=major,
    grid style={dashed, gray!25},
    clip=true,
    title={\textbf{(b)} Text baseline},
    title style={font=\small, at={(0.5,1.02)}},
    legend style={
        at={(0.98,0.02)},
        anchor=south east,
        font=\tiny,
        draw=tscggray!40,
        fill=white,
        fill opacity=0.92,
        text opacity=1,
        rounded corners=1pt,
        row sep=0pt,
        /tikz/every even column/.append style={column sep=3pt},
    },
    legend cell align=left,
]

\addplot[black, dashed, thin, forget plot] coordinates { (0.6, 0) (1.05, 0) };

\addplot[black!70, thick, solid, forget plot, domain=0.6:1.05, samples=2]
    {-0.1878*x + 0.0713};

\addplot[only marks, mark=square*, mark size=1.8pt, color=tscgblue, fill=tscgblue]
coordinates {
    (0.7444,-0.0878) (0.8000,-0.0267) (0.7947,-0.0697)
    (0.6804,-0.0132) (0.7833, 0.0013) (0.7550,-0.1089) (0.7500,-0.1133)
};
\addlegendentry{Mistral 7B}

\addplot[only marks, mark=*, mark size=1.8pt, color=tscgorange, fill=tscgorange]
coordinates {
    (0.8500,-0.0567) (0.8517,-0.0483) (0.7950,-0.0278)
    (0.8208,-0.0083) (0.8727,-0.0130) (0.8180, 0.0233) (0.8625, 0.0000)
};
\addlegendentry{Llama 8B}

\addplot[only marks, mark=star, mark size=2.4pt, color=tscgteal, fill=tscgteal]
coordinates {
    (1.0000,-0.0333) (1.0000,-0.0488) (1.0000,-0.1329)
    (0.9083,-0.0206) (0.9717,-0.0357) (0.8800, 0.0283) (1.0000,-0.0202)
};
\addlegendentry{Gemma 12B}

\addplot[only marks, mark=pentagon*, mark size=2pt, color=black, fill=tscggray]
coordinates {
    (0.7300,-0.3000) (0.9933,-0.3000) (0.7933,-0.1033)
    (0.8500,-0.3222) (0.9433,-0.3533) (0.8000,-0.1500) (0.9500,-0.2000)
};
\addlegendentry{Qwen 4B}

\addplot[only marks, mark=pentagon*, mark size=2pt, color=tscgpurple, fill=tscgpurple]
coordinates {
    (0.9578,-0.1203) (0.9300,-0.0033) (0.9311,-0.0650)
    (0.9239,-0.0731) (0.9178,-0.0196) (1.0000,-0.2227) (0.9933,-0.1473)
};
\addlegendentry{Qwen 14B}

\addplot[only marks, mark=diamond*, mark size=2.2pt, color=tscggreen, fill=tscggreen]
coordinates {
    (0.8144,-0.0655) (0.8947,-0.1853) (0.8067,-0.0596)
    (0.7929,-0.1828) (0.9033,-0.2331) (0.7733,-0.0594) (0.8500, 0.0244)
};
\addlegendentry{Gemma 4B}

\addplot[only marks, mark=triangle*, mark size=2.4pt, color=tscgred, fill=tscgred]
coordinates {
    (0.9714,-0.0960) (0.9353,-0.0607) (0.9393,-0.0253)
    (0.9333,-0.1029) (0.9638,-0.1196) (0.9750,-0.2670) (0.8742, 0.0285)
};
\addlegendentry{Phi-4 14B}

\node[anchor=north west, font=\small\bfseries, fill=white, fill opacity=0.85,
    text opacity=1, inner sep=3pt, rounded corners=1pt]
    at (axis cs:0.62,0.08) {$R^2 = 0.03$};

\node[anchor=south west, font=\tiny, text=black!60]
    at (axis cs:0.92,0.005) {break-even};

\end{axis}
\end{tikzpicture}
\end{subfigure}
\caption{Format-confound decomposition of the predictive model. \textbf{(a)}~Against JSON baselines ($n{=}49$, 7 models $\times$ 7 catalog sizes), natural accuracy strongly predicts TSCG improvement ($R^2{=}0.88$, slope~${\approx}{-}0.93$): models that fail at JSON parsing gain the most. \textbf{(b)}~Against text baselines ($n{=}49$), the correlation vanishes ($R^2{=}0.03$, $p{=}0.24$): once the format confound is removed, baseline accuracy has no predictive power. The 97\% drop in $R^2$ confirms that the JSON-baseline regression captures format sensitivity, not compression benefit.}
\label{fig:natural-vs-delta}
\end{figure*}

\subsection{Three Operator-Sensitivity Archetypes (T1.6)}
\label{sec:archetypes}

Per-operator effects were isolated via leave-one-in methodology across all 8 operators and three frontier models ($n{=}40$ per cell, 2 seeds), reporting both positive and negative contributions for full transparency.
This reveals three qualitatively distinct response profiles (Table~\ref{tab:archetypes}).

\begin{table}[!tbp]
\centering
\small
\caption{Three operator-sensitivity archetypes across frontier models (43-tool MCP catalog, $n{=}40$ per cell, 2 seeds).}
\label{tab:archetypes}
\resizebox{\columnwidth}{!}{%
\begin{tabular}{@{}llp{5.0cm}@{}}
\toprule
\textbf{Archetype} & \textbf{Model} & \textbf{Operator Profile} \\
\midrule
Operator-HUNGRY & Opus 4.7 & Every operator helps. CCP alone $+$20\,pp. CFL+CFO synergy $+$17.5\,pp (super-additive). All-8-ops optimal. \\
\addlinespace
Operator-SENSITIVE & GPT-5.2 & CFL $+$2.5\,pp, CFO $-$5\,pp, CCP 0\,pp. All-8-ops is worst case ($-$10\,pp). CFO must be excluded. \\
\addlinespace
Operator-ROBUST & Sonnet 4 & 6/7 per-operator conditions identical at 80.0\%. Only CFO causes $-$2.5\,pp. Any safe config works. \\
\bottomrule
\end{tabular}}
\end{table}

The archetype taxonomy converts a potential weakness (``no universal best configuration'') into a contribution: empirically characterized, model-specific deployment guidance.
The full per-operator heatmap (Figure~\ref{fig:archetypes-heatmap}) visualizes the contrast across all tested conditions.

\begin{figure}[t]
\centering
\begin{tikzpicture}[
    cell/.style={minimum width=1.45cm, minimum height=0.65cm, anchor=center, font=\scriptsize},
    header/.style={font=\scriptsize\bfseries, anchor=center},
]


\node[header] at (0, 0.5) {};
\node[header, text=tscgblue] at (1.5, 0.5) {Opus 4.7};
\node[header, text=tscgred] at (3.0, 0.5) {GPT-5.2};
\node[header, text=tscggray] at (4.5, 0.5) {Sonnet 4};

\node[header, anchor=east] at (-0.1, 0) {CFL};
\node[cell, fill=tscggreen!25] at (1.5, 0) {$+$5.0};
\node[cell, fill=tscggreen!15] at (3.0, 0) {$+$2.5};
\node[cell, fill=gray!8] at (4.5, 0) {0.0};

\node[header, anchor=east] at (-0.1, -0.7) {CFO};
\node[cell, fill=tscggreen!35] at (1.5, -0.7) {$+$7.5};
\node[cell, fill=tscgred!35] at (3.0, -0.7) {$-$5.0};
\node[cell, fill=tscgred!15] at (4.5, -0.7) {$-$2.5};

\node[header, anchor=east] at (-0.1, -1.4) {CCP};
\node[cell, fill=tscggreen!80] at (1.5, -1.4) {\textbf{$+$20.0}};
\node[cell, fill=gray!8] at (3.0, -1.4) {0.0};
\node[cell, fill=gray!8] at (4.5, -1.4) {0.0};

\node[header, anchor=east] at (-0.1, -2.1) {SAD};
\node[cell, fill=tscggreen!60] at (1.5, -2.1) {$+$15.0};
\node[cell, fill=tscgred!15] at (3.0, -2.1) {$-$2.5};
\node[cell, fill=gray!8] at (4.5, -2.1) {0.0};

\node[header, anchor=east] at (-0.1, -2.8) {CFL+CFO};
\node[cell, fill=tscggreen!70] at (1.5, -2.8) {$+$17.5};
\node[cell, fill=tscgred!15] at (3.0, -2.8) {$-$2.5};
\node[cell, fill=gray!8] at (4.5, -2.8) {0.0};

\node[header, anchor=east] at (-0.1, -3.5) {all-8};
\node[cell, fill=tscggreen!80] at (1.5, -3.5) {\textbf{$+$20.0}};
\node[cell, fill=tscgred!60] at (3.0, -3.5) {\textbf{$-$10.0}};
\node[cell, fill=gray!8] at (4.5, -3.5) {0.0};

\draw[gray!40] (0.7, 0.85) -- (0.7, -3.85);
\draw[gray!40] (2.25, 0.85) -- (2.25, -3.85);
\draw[gray!40] (3.75, 0.85) -- (3.75, -3.85);
\draw[gray!40] (5.25, 0.85) -- (5.25, -3.85);
\foreach \y in {0.3, -0.35, -1.05, -1.75, -2.45, -3.15, -3.85} {
    \draw[gray!40] (0.7, \y) -- (5.25, \y);
}

\node[font=\tiny, text=tscgblue] at (1.5, -4.2) {HUNGRY};
\node[font=\tiny, text=tscgred] at (3.0, -4.2) {SENSITIVE};
\node[font=\tiny, text=tscggray] at (4.5, -4.2) {ROBUST};

\end{tikzpicture}
\caption{Per-operator isolation heatmap ($\Delta$\,pp vs.\ baseline-v10, 43-tool MCP catalog). Opus~4.7 benefits from every operator (HUNGRY); GPT-5.2 is harmed by CFO and all-8-ops (SENSITIVE); Sonnet~4 is invariant except for CFO (ROBUST). Values from leave-one-in methodology ($n{=}40$ per cell, 2~seeds).}
\label{fig:archetypes-heatmap}
\end{figure}

\subsection{Opus Scaling: Heavy-Schema Rescue (T1.7)}
\label{sec:scaling}

Five-point Opus~4.7 scaling curve across synthetic and production schemas (Table~\ref{tab:opus-scaling}).

\begin{table}[!tbp]
\centering
\small
\caption{Opus 4.7 scaling: TSCG advantage persists on heavy production schemas despite saturation on light synthetic catalogs.}
\label{tab:opus-scaling}
\resizebox{\columnwidth}{!}{%
\begin{tabular}{@{}llrrrr@{}}
\toprule
\textbf{Scale} & \textbf{Schema Type} & \textbf{json-text} & \textbf{v13-smart} & \textbf{$\Delta$} & \textbf{Savings} \\
\midrule
43t MCP       & light (benchmark)  & 76.7 & 80.0 & $+$3.3 & 56.6\% \\
50t synthetic & light (synthetic)  & 68.3 & 76.7 & $+$8.3 & 55.1\% \\
75t synthetic & light (synthetic)  & 73.3 & 73.3 &   0.0  & 56.1\% \\
100t synthetic& light (synthetic)  & 75.0 & 75.0 &   0.0  & 52.5\% \\
\textbf{43t MCP}       & \textbf{heavy (production)} & \textbf{75.0} & \textbf{80.0} & \textbf{$+$5.0} & \textbf{56.6\%} \\
\bottomrule
\end{tabular}}
\end{table}

TSCG advantage saturates on light synthetic catalogs at 75--100 tools but \emph{persists} on heavy production MCP schemas ($+$5.0\,pp at {$\sim$}10{,}500 input tokens per call, with perfect seed stability: 80/80/80; Figure~\ref{fig:opus-scaling}).
Token savings remain 52--57\% throughout.
This indicates saturation is catalog-weight-specific, not a fundamental TSCG limit: heavy per-tool schemas carry more parser-ambiguity that structured compilation resolves.

\begin{figure}[t]
\centering
\begin{tikzpicture}
\begin{axis}[
    width=0.85\columnwidth,
    height=5.2cm,
    xlabel={Catalog size (tools)},
    ylabel={TSCG advantage ($\Delta$\,pp)},
    xmin=35, xmax=110,
    ymin=-2, ymax=10,
    xtick={43,50,75,100},
    ytick={-2,0,2,4,6,8,10},
    grid=major,
    grid style={gray!30},
    legend style={
        at={(0.02,0.98)},
        anchor=north west,
        font=\scriptsize,
        draw=gray!50,
        fill=white,
        fill opacity=0.9,
    },
    every axis plot/.append style={thick},
    clip=false,
]

\addplot[
    color=tscgblue,
    mark=o,
    mark size=3pt,
    mark options={solid, fill=tscgblue!30},
] coordinates {
    (43, 3.3)
    (50, 8.3)
    (75, 0.0)
    (100, 0.0)
};
\addlegendentry{Light schemas}

\addplot[
    color=tscgred,
    mark=square*,
    mark size=4pt,
    mark options={solid, fill=tscgred!60},
    only marks,
] coordinates {
    (43, 5.0)
};
\addlegendentry{Heavy MCP (prod.)}

\draw[gray, dashed, thin] (axis cs:35,0) -- (axis cs:110,0);

\draw[tscggray, decorate, decoration={brace, amplitude=4pt, mirror}]
    (axis cs:75, -1.8) -- (axis cs:100, -1.8)
    node[midway, below=5pt, font=\scriptsize, text=tscggray] {saturation};

\draw[-{Stealth[length=4pt]}, tscgred, thick]
    (axis cs:50, 5.0) -- (axis cs:44.5, 5.0)
    node[pos=0, right, font=\scriptsize, text=tscgred] {$+$5.0\,pp};

\end{axis}
\end{tikzpicture}
\caption{Opus~4.7 TSCG advantage across catalog sizes. Light synthetic schemas saturate at 75--100 tools ($\Delta{=}0$), but heavy production MCP schemas (${\sim}$10{,}500 input tokens per call) sustain a $+$5.0\,pp advantage at 43 tools with perfect seed stability (80/80/80). Saturation is catalog-weight-specific, not a fundamental TSCG limit.}
\label{fig:opus-scaling}
\end{figure}

\subsection{Benchmark Generalization: TAB $\to$ MCP (T1.3)}
\label{sec:generalization}

Sonnet~4 on 43-tool MCP Combined: TSCG delta = $-$1.6\,pp on synthetic TAB vs.\ $-$1.7\,pp on real MCP schemas---the synthetic benchmark predicts real MCP behavior within 0.1\,pp.
This directly addresses the ``self-constructed benchmark'' concern with tight quantitative evidence.

\subsection{External Validation}
\label{sec:bfcl-integration}

BFCL: 93.2\% on Berkeley Function Calling Leaderboard schemas (vs.\ 85.7\% natural; ARR~108\%, 46.8\% savings)---TSCG \emph{improves} accuracy on third-party benchmarks.
Naive truncation at 16~tools matches TSCG (87.0\% vs.\ 87.8\%); at 50~ambiguous tools, TSCG achieves 100\% vs.\ 98.5\%, with multi-tool sequencing driving the gap (87.5\% vs.\ 75.0\%).

\FloatBarrier
\section{Discussion}
\label{sec:discussion}


\subsection{Three Mechanisms of Improvement}
\label{sec:three-mechanisms}

TSCG's accuracy improvements arise from three distinct mechanisms.
\textbf{M1: Format Translation} converts JSON schemas to structured text---the dominant mechanism for Class~1 models, where JSON parsing is the bottleneck (e.g., Phi-4: 0\%$\to$84.4\% at 20 tools, entirely format-driven per E4 decomposition).
\textbf{M2: Structural Reorganization} (CAS reordering, CFL constraint positioning, CFO causal ordering) improves accuracy beyond format change---the dominant mechanism for Class~2 frontier models ($+$10.9\,pp mean against native function calling).
However, CFL and CFO become counterproductive at ${\geq}$43 tools, limiting M2's benefit to CAS at larger catalog sizes.
\textbf{M3: Token Reduction} decreases attention dilution as a secondary effect. Notably, GPT-4o in Scenario~A achieves $+$1.0\,pp accuracy improvement with only 6.2\% token savings (Table~\ref{tab:tab-frontier}), suggesting M2 (structural reorganization) operates substantially independently of M3. The relative contribution of M2 versus M3 is a direction for future work with controlled ablations.

\subsection{Four-Class Behavioral Taxonomy}
\label{sec:enabler-optimizer}

Two complementary taxonomies emerge from our analysis. The three-archetype classification in Section~\ref{sec:archetypes} characterizes per-operator response profiles across three frontier models (Opus 4.7, GPT-5.2, Sonnet~4), derived from leave-one-in isolation experiments. The four-class taxonomy below generalizes this analysis to all twelve tested models, combining the archetype data with E4 text-baseline experiments and N3 conservative ablations to capture format-versus-compression behavior across model scales.

E4 text-baseline experiments (2{,}520 calls), N3 conservative ablation (180 calls), and N1 30B benchmarks (840 calls) yield a four-class taxonomy across 12~models:
\textbf{Class~1} (Format-dominated: Phi-4, Mistral 7B, Gemma 4B, Qwen3 4B)---large JSON-baseline gains ($+$17 to $+$90\,pp) that vanish against text baselines ($-$7 to $-$23\,pp); benefit is purely format translation.
\textbf{Class~2} (Compression: Claude, GPT-4o, GPT-5.2)---genuine structural compression persists against text baselines ($+$5 to $+$11\,pp for Claude Sonnet~4).
\textbf{Class~3} (Neutral: Llama 8B, Gemma 12B, Mistral-Small 24B)---TSCG neither helps nor harms.
\textbf{Class~4} (Conservative-only: Qwen3 14B, Qwen2.5-Coder 32B)---balanced TSCG degrades accuracy, but conservative SDM improves it ($+$4.4\,pp).

\paragraph{Deployment guidance.}
Deployment should follow per-model configuration rather than a single universal default.
Opus-class models tolerate the full operator pipeline; GPT-class requires operator selection (excluding CFO); Sonnet-class is robust to any safe configuration.
For local models: conservative profile for all architectures except Qwen (where native JSON is preferable when available); balanced for frontier APIs.
The conservative profile is validated as safe default across 8/9 local models (4B--32B).

\paragraph{Functional validity.}
A spot-check against the Filesystem MCP server confirms that TSCG-compiled schemas produce executable tool calls on production endpoints, validating that compression preserves call-level correctness beyond benchmark accuracy.

\subsection{Why Schema Compression Cannot Be Internalized}
\label{sec:pretokenization}

A natural question is whether tool-schema compression could be achieved through prompt instructions---asking the model to ``internally compress'' schemas before reasoning. This approach fails for several architectural reasons.

\textbf{Tokenization is one-shot and pre-model.} The tokenizer operates on the input string before any model computation. By the time the model processes the input, the JSON schema has already been converted to a token sequence. The model cannot retroactively retokenize or remove tokens already in its context window.

\textbf{Compression instructions add tokens.} An instruction to ``compress tool schemas internally'' is itself additional tokens in the context window. Empirically, attempting this approach increases total token consumption rather than reducing it, while providing no measurable accuracy benefit (we observed this in pilot experiments not included in the main results).

\textbf{Format effects are pre-attention.} As shown by our $R^2$ decomposition (Section~\ref{sec:format-effect}), 88\% of token-cost variance comes from format choice (text vs.\ JSON), with operators contributing additional refinement. The format choice happens at the input boundary---the model has no mechanism to ``reframe'' its inputs mid-stream.

\textbf{Implications.} Our compiler-based approach succeeds specifically because it operates \emph{pre-tokenization}. This positions schema compression as an architectural choice rather than a prompt-engineering technique, with three practical consequences: (1) compression must happen externally, before the API call; (2) caching of compiled schemas is straightforward (deterministic compilation); (3) the compression layer is model-agnostic---the same compiled output works across any model that accepts text input. This architectural argument explains why our approach is preferable to model-side prompting strategies and motivates our external compilation library design.

\subsection{Toward a Community Schema Registry}
\label{ssec:future-registry}

Our compiler-based approach enables a forward-looking research direction: a community-curated registry of pre-compiled tool schemas. The taxonomy could span:

\begin{itemize}
\item \textbf{API services}: compiled schemas for Stripe, Slack, Notion, GitHub, and similar widely-used services.
\item \textbf{MCP servers}: pre-compiled versions of standard Model Context Protocol servers \citep{anthropic_mcp_2024} (filesystem, postgres, GitHub MCP, etc.).
\item \textbf{Database operations}: common SQL, NoSQL, and vector database operations.
\item \textbf{Cloud providers}: AWS, GCP, Azure operation schemas.
\item \textbf{Developer tools}: Git, Docker, Kubernetes operations.
\item \textbf{Productivity tools}: calendar, email, task management schemas.
\end{itemize}

Each entry would include compiled versions for multiple archetype profiles (conservative, balanced, aggressive), tested-against-models metadata, benchmark results, and version pinning relative to the source API.

\paragraph{Quality assurance considerations.}
A schema registry requires CI/CD validation that compiled schemas are semantically equivalent to source schemas. This involves: schema-equivalence testing (the compiled schema must invoke the same functions with the same parameters), accuracy benchmarking against the original (to detect regressions when source APIs evolve), and per-model archetype validation (to ensure profile recommendations remain accurate as models update).

\paragraph{Network effects and maintenance.}
A successful registry would benefit from network effects: as adoption grows, contributors add new schemas, the catalog becomes more comprehensive, and the marginal value of each new contribution increases. However, this creates maintenance obligations: API drift requires re-compilation when source schemas change, and per-version model inversions (Section~\ref{sec:archetypes}) imply profiles need revalidation when major model versions release.

\paragraph{Ecosystem implications.}
Such a registry could establish schema compression as standard infrastructure rather than an opt-in optimization, analogous to package registries in software ecosystems (npm, PyPI). This represents a qualitatively different contribution from a single tool: it would catalyze ecosystem-level efficiency improvements that no single compression library can achieve in isolation. Future work will explore the technical, organizational, and community-governance challenges involved in building and maintaining such infrastructure.

\section{Limitations}
\label{sec:limitations}

\textbf{Benchmark scope.} TAB is self-constructed; we mitigate this with BFCL external validation (ARR 108\%) and TAB-to-MCP transfer (0.1\,pp delta), but independent third-party evaluation on diverse catalogs is needed.

\textbf{Task coverage.} TAB measures tool selection accuracy and parameter extraction only---not generation quality, multi-turn coherence, or end-to-end task completion.

\textbf{Small-model degradation.} Class~1 models show negative compression gains against text baselines ($-$7 to $-$23\,pp), meaning TSCG harms accuracy in hypothetical text-mode deployments with $<$10 tools.

\textbf{Language.} All evaluations and filler patterns are English-only; multilingual extension requires language-specific libraries.

\textbf{Model coverage.} 12 models across 4 architecture families. Generalization to untested architectures (Mamba, RWKV) and thinking models (o1, o3-mini) is uncertain.

\textbf{Model version evolution.} Frontier model APIs are continuously updated; exact performance characterization holds for the tested API versions during March--April 2026. Results may shift as providers deploy model updates.

\textbf{Predictive model.} LOO RMSE of 12.8\,pp provides coarse triage; Phi-4's zero-accuracy leverage points inflate $R^2$ (excluding Phi-4: $R^2{=}0.81$).

\textbf{Scope and choice of approach.} Our compiler-based, format-preserving approach achieves moderate compression (50--72\%) compared to schema-replacement strategies like MCP Code Mode \citep{anthropic_mcp_code_2025, cloudflare_code_mode_2026}, which achieve 98--99\% reduction. The choice depends on use case: schema replacement excels when all tools have executable code interfaces and a sandboxed runtime is acceptable. Our approach preserves full schema semantics and applies to non-executable tool definitions; we observe this fidelity to be beneficial when models can use parameter constraints during reasoning. Compared to mcp-compressor's wrapper-tool reduction \citep{atlassian_mcp_compressor_2026} (70--97\%), our compression is lower but maintains full schema visibility, which we hypothesize underlies our consistent accuracy improvements (Section~\ref{sec:results}). A direct head-to-head benchmark against these systems is left to future work.

\textbf{Evaluation freeze.} GPT-5.5 was released after our evaluation period (March--April 2026). Extending TAB to post-freeze models is left to future work.

\textbf{AI assistance.} AI language models assisted with code quality checks, statistical verification, and manuscript editing. All experimental design, data collection, analysis, and scientific conclusions are the sole work of the authors.

\FloatBarrier
\section{Conclusion}
\label{sec:conclusion}

We have presented Token-Context Semantic Grammar (TSCG), a deterministic tool-schema compiler that addresses the protocol mismatch between JSON-based tool APIs and LLM token processing.

\textbf{JSON-API enablement.} Every production framework transmits tools as JSON, yet small models (4B--14B) achieve only 0--49\% accuracy at $>$15 JSON-format tools.
TSCG recovers accuracy to 65--90\% by translating JSON schemas to structured text---enabling small models as functional tool-use agents in edge and privacy-constrained environments.
E4 decomposition (2{,}520 Ollama calls across 6 models) identifies format translation as the dominant mechanism for this class.

\textbf{Frontier compression.} For frontier models (Claude Sonnet~4, GPT-4o, GPT-5.2), TSCG provides genuine structural compression ($+$5--11\,pp for Claude Sonnet~4), with 50--72\% token savings that persist when the format confound is eliminated.
On heavy production MCP schemas, advantages persist ($+$5.0\,pp) despite saturation on light synthetic catalogs.

\textbf{Mechanistic decomposition.} The first format-versus-compression analysis for tool-schema optimization yields a four-class taxonomy across 12 models (4B--32B + frontier), with concrete deployment recommendations per model class.
The predictive model ($R^2{=}0.88$, collapsing to 0.03 against text baselines) quantifies format sensitivity and enables deployment triage from a single baseline measurement.

TSCG is released as an open-source npm package---1{,}200 LOC TypeScript, zero dependencies, sub-millisecond execution.\footnote{\url{https://github.com/SKZL-AI/tscg}}
TAB, the first tool-schema compression benchmark (\tscgTotalCalls{} calls, 12 models, 5 scenarios), is released as open-source evaluation infrastructure.

\section*{Acknowledgements}
This work was conducted independently without institutional funding or affiliation.
The author thanks Dominic Wolf\footnote{ORCID: \url{https://orcid.org/0009-0003-5789-1991}} for sharing valuable practical experience with TSCG deployments that informed several design decisions.
Portions of the experimental analysis, quality control, and manuscript preparation were assisted by AI language models (Claude, GPT-4), consistent with ACL's policy on AI writing assistance.
All scientific claims, experimental design, data analysis, and final editorial decisions are solely the author's responsibility.

\section*{Data and Code Availability}
The TSCG compiler, TAB benchmark suite (task definitions, tool schemas, evaluation harness), and all evaluation scripts are released as open-source software at: \url{https://github.com/SKZL-AI/tscg}.
A persistent archive is available via Zenodo: \url{https://doi.org/10.5281/zenodo.19795991}.
Raw evaluation logs are available on request.

\bibliography{references}

\FloatBarrier
\appendix

\section{Pre-TAB Evaluation}
\label{app:pre-tab}

The pre-TAB evaluation comprised 19 core tasks across 7 categories (Factual $n{=}4$, Reasoning $n{=}4$, Classification $n{=}2$, Extraction $n{=}1$, OptFirst $n{=}3$, Complex $n{=}2$, NearDup $n{=}3$), 6 compression strategies, 11 independent runs on Claude Sonnet~4, and domain-specific benchmarks across 4 phases. This evaluation established TSCG's baseline effectiveness before the multi-model TAB benchmark. Full results are available in the supplementary materials.

\section{GSM8K-Under-Load}
\label{app:gsm8k}

Tool-schema overhead does not degrade reasoning: Claude Sonnet~4 maintains ${\sim}$81\% GSM8K accuracy across 0--50 irrelevant tool definitions (${\sim}$400 calls across both models). TSCG provides a consistent ${\sim}$4.5\,pp advantage over natural schemas for Claude Sonnet~4, suggesting that reduced schema overhead frees attention capacity for mathematical reasoning. The effect is statistically significant ($+$4.5\,pp, $p = 0.020$) but practically small (Cohen's $d = 0.11$). GPT-5.2 shows slightly more degradation under schema load but maintains $>$75\% accuracy throughout.

\section{SDM Ablation Detail}
\label{app:sdm-detail}

Conservative SDM (filler removal only) is compared with Balanced (full structural compression) on Qwen3-14B (180~calls, N3 corrected re-run at sizes 10/20/50):

\begin{itemize}[noitemsep]
    \item Balanced TSCG degrades accuracy by $-$6.6\,pp mean (90.2\%$\to$84.1\% at 20 tools, 94.6\%$\to$89.6\% at 50 tools).
    \item Conservative SDM \emph{improves} accuracy by $+$4.4\,pp mean (90.2\%$\to$99.3\% at 20 tools, 94.6\%$\to$95.0\% at 50 tools).
    \item CAS reordering and bracket elision---not SDM filler removal---trigger the degradation.
\end{itemize}

\noindent Practical recommendation: Conservative SDM for all models ${<}$10B and all Qwen architectures (including Qwen2.5-Coder~32B, confirmed by N1); Balanced for frontier models where genuine compression persists.

\section{Predictive Model and LOO Cross-Validation}
\label{app:loo-details}

The accuracy gap $\Delta(m)$ between Natural and TSCG for model $m$ can be predicted from Natural-baseline performance alone:
\begin{equation}
\label{eq:correlation}
\Delta(m) = \alpha \cdot \text{Natural}(m) + \beta + \epsilon
\end{equation}
Table~\ref{tab:r2-robustness} reports robustness checks on this regression.

\begin{table}[t]
\centering
\small
\caption{R\textsuperscript{2} robustness checks for the accuracy-gap regression.}
\label{tab:r2-robustness}
\begin{tabular}{@{}lcccc@{}}
\toprule
\textbf{Configuration} & $n$ & $R^2$ & \textbf{95\% CI} & $p$ \\
\midrule
7-model full set & 49 & 0.885 & $[0.78,\; 0.93]$ & $< 10^{-20}$ \\
7-model sans Phi-4 & 42 & 0.813 & $[0.58,\; 0.94]$ & $< 10^{-10}$ \\
Model-level (7 means) & 7 & 0.951 & $[0.62,\; 1.00]$ & $< 0.001$ \\
Text baseline (E4) & 49 & 0.028 & --- & 0.24 \\
\bottomrule
\end{tabular}
\end{table}

\noindent Phi-4~14B is the strongest leverage point ($\Delta R^2 = -0.07$ when removed), but $R^2$ remains above 0.80 without it. The linear relationship holds at the model level ($R^2 = 0.95$). LOO cross-validation (holding out one model, refitting on remaining six) yields RMSE~=~12.8\,pp---adequate for deployment triage but insufficient for precise predictions. Qwen3~4B is the only model where the prediction direction is wrong (predicted positive, actual negative), reflecting its volatile behavior.

\section{Theorem Proofs}
\label{app:proofs}

\begin{proof}[Proof of Theorem~\ref{thm:compression}]
Each token-reducing operator $T_i \in \mathcal{T}_R$ acts on a disjoint subset of tokens:
\begin{itemize}[noitemsep]
    \item \textbf{SDM}: filler patterns (104+ curated, non-overlapping with delimiters/tokenizer targets).
    \item \textbf{DRO}: verbose delimiter phrases (structural role markers).
    \item \textbf{TAS}: suboptimal tokenizer splits (multi-token $\to$ single-token Unicode).
    \item \textbf{CFL}: positional relocation of output constraint to position~0.
\end{itemize}
Disjoint action $\implies$ additive composition. $\mathcal{T}_S$ preserves count; $B{=}0$ excludes expansion:
\[
    |\tok(\Pi(S))| \;\leq\; |\tok(S)| \cdot \Bigl(1 - \sum_{T_i \in \mathcal{T}_R} r_i \cdot f_i(S)\Bigr). \qedhere
\]
\end{proof}

\section{Implementation and Deployment}
\label{app:deployment}

TSCG is implemented as ${\sim}$1{,}200 lines of TypeScript with zero external dependencies. Three deployment modes: CLI (npm), Chrome Extension (Manifest~V3), and Web Application (React + Vite). Production bundle: 34.7\,KB (11.7\,KB gzipped). Sub-millisecond execution for prompts under 4{,}096 tokens. Four optimization profiles (Table~\ref{tab:profiles}):

\begin{table}[t]
\centering
\small
\caption{TSCG optimization profiles as implemented in \texttt{@tscg/core} v1.4.0.}
\label{tab:profiles}
\begin{tabular}{@{}lp{6cm}@{}}
\toprule
\textbf{Profile} & \textbf{Active Principles} \\
\midrule
\texttt{conservative} & SDM only. Safe default for all local models (4B--32B). \\
\texttt{balanced} & SDM, CAS, CFO, DRO, TAS, CCP. CFL/CFO auto-disabled at ${\geq}$30 tools in the released implementation. \\
\texttt{aggressive} & All 8 on Claude; 6 on non-Claude (CFL and SAD-F auto-disabled by echo-back guard). \\
\texttt{auto} & Adaptive: conservative at ${\leq}$20 tools, balanced (sans CFL/CFO) at 21--40, conservative at $>$40. \\
\bottomrule
\end{tabular}
\end{table}

\section{Extended Prior-Art Comparison}
\label{app:prior-art}

\begin{table*}[t]
\centering
\small
\setlength{\tabcolsep}{2.5pt}
\begin{tabular}{@{}l c c c c c c c c@{}}
\toprule
\textbf{System} & \textbf{Det.} & \textbf{Tok.} & \textbf{Caus.} & \textbf{B-Box} & \textbf{Bdgt.} & \textbf{Cmp.} & \textbf{NoTr.} & \textbf{0-Dep} \\
\midrule
\multicolumn{9}{l}{\textit{Compression-Based}} \\
\cmidrule(r){1-1}
LLMLingua \cite{jiang2023llmlingua}         & \xmark & \xmark & \xmark & \cmark & \cmark & \cmark & \xmark & \xmark \\
LLMLingua-2 \cite{pan2024llmlingua2}        & \xmark & \xmark & \xmark & \cmark & \cmark & \cmark & \xmark & \xmark \\
Selective Context \cite{li2023compressing}   & \cmark & \xmark & \xmark & \cmark & \xmark & \cmark & \cmark & \xmark \\
LongLLMLingua \cite{jiang2024longllmlingua}  & \xmark & \xmark & \xmark & \cmark & \cmark & \cmark & \xmark & \xmark \\
Gist Tokens \cite{mu2023learning}            & \xmark & \xmark & \xmark & \xmark & \xmark & \cmark & \xmark & \xmark \\
AutoCompressors \cite{chevalier2023adapting} & \xmark & \xmark & \xmark & \xmark & \xmark & \cmark & \xmark & \xmark \\
LeanContext \cite{arefeen2024leancontext}    & \xmark & \xmark & \xmark & \cmark & \xmark & \cmark & \xmark & \xmark \\
\midrule
\multicolumn{9}{l}{\textit{Search-Based}} \\
\cmidrule(r){1-1}
DSPy \cite{khattab2023dspy}                  & \xmark & \xmark & \xmark & \cmark & \xmark & \xmark & \cmark & \xmark \\
OPRO \cite{yang2024large}                    & \xmark & \xmark & \xmark & \cmark & \xmark & \xmark & \cmark & \xmark \\
APE \cite{zhou2023large}                     & \xmark & \xmark & \xmark & \cmark & \xmark & \xmark & \cmark & \xmark \\
EvoPrompt \cite{chen2024evoprompt}           & \xmark & \xmark & \xmark & \cmark & \xmark & \xmark & \cmark & \xmark \\
PromptBreeder \cite{fernando2024promptbreeder} & \xmark & \xmark & \xmark & \cmark & \xmark & \xmark & \cmark & \xmark \\
USP \cite{wan2024cosp}                       & \xmark & \xmark & \xmark & \cmark & \xmark & \xmark & \cmark & \xmark \\
\midrule
\multicolumn{9}{l}{\textit{Gradient-Based}} \\
\cmidrule(r){1-1}
TextGrad \cite{yuksekgonul2024textgrad}      & \xmark & \xmark & \xmark & \xmark & \xmark & \xmark & \cmark & \xmark \\
ProTeGi \cite{pryzant2023automatically}      & \xmark & \xmark & \xmark & \xmark & \xmark & \xmark & \cmark & \xmark \\
PromptAgent \cite{wang2023promptagent}       & \xmark & \xmark & \xmark & \cmark & \xmark & \xmark & \cmark & \xmark \\
\midrule
\multicolumn{9}{l}{\textit{Template / Theoretical}} \\
\cmidrule(r){1-1}
LangGPT \cite{wang2024langgpt}              & \cmark & \xmark & \xmark & \cmark & \xmark & \xmark & \cmark & \xmark \\
\midrule
\textbf{TSCG (ours)}                         & \cmark & \cmark & \cmark & \cmark & \cmark & \cmark & \cmark & \cmark \\
\bottomrule
\end{tabular}
\caption{Comparison of 16 prompt optimization systems and TSCG across eight desiderata. \textbf{Determ.}: deterministic output for the same input. \textbf{Tok-Aware}: optimizes with respect to BPE tokenizer boundaries. \textbf{Causal Th.}: grounded in causal attention theory. \textbf{Black-Box}: requires no access to model weights or gradients. \textbf{Budg. Anch.}: budget-constrained anchor duplication. \textbf{Compress.}: achieves measurable token compression. \textbf{No Train}: requires no model training or fine-tuning. \textbf{Zero-Dep}: zero external runtime dependencies. TSCG is the only system satisfying all eight criteria.}
\label{tab:prior-art}
\end{table*}

\section{Benchmark Configurations and Statistical Detail}
\label{app:configs}

Table~\ref{tab:run-configs} documents configuration parameters for all benchmark runs.

\begin{table}[t]
\centering
\small
\setlength{\tabcolsep}{2.5pt}
\caption{Experimental configurations.}
\label{tab:run-configs}
\begin{tabular}{@{}llll@{}}
\toprule
\textbf{Parameter} & \textbf{Core/Domain} & \textbf{TAB} & \textbf{Ollama} \\
\midrule
Temperature & 0 & 0 & 0 \\
Seed & N/A & 42 & 42 \\
Max tokens & 4096 & 1024 & 1024 \\
Mode & Text + FC & Text-mode & Text-mode \\
SDM profile & full & balanced & balanced \\
API provider & Anthropic/OpenAI & Anthropic & Ollama (local) \\
Eval. method & Automated & TABEvaluator & TABEvaluator \\
\bottomrule
\end{tabular}
\end{table}

\subsection{Holm-Bonferroni Correction}
\label{app:holm-bonferroni}

107 pairwise McNemar tests, family-wise $\alpha = 0.05$. 9/107 achieve significance (Table~\ref{tab:holm-bonferroni}), all from Scenario~D with weak JSON baselines.

\begin{table}[t]
\centering
\small
\setlength{\tabcolsep}{3pt}
\caption{Significant comparisons (Holm-Bonferroni, $\alpha{=}0.05$, 107 tests).}
\label{tab:holm-bonferroni}
\begin{tabular}{@{}llcrr@{}}
\toprule
\textbf{Model} & \textbf{Tools} & \textbf{$\Delta$pp} & \textbf{$p_{\mathrm{adj}}$} & \textbf{$d$} \\
\midrule
Phi-4 14B    & 5/10/15 & $+$95.0 & 0.003 & 6.11 \\
Phi-4 14B    & 20 & $+$84.4 & 0.003 & 4.21 \\
Phi-4 14B    & 30 & $+$80.0 & 0.011 & 2.81 \\
Phi-4 14B    & 50 & $+$90.3 & 0.003 & 6.11 \\
Gemma 3 4B   & 3  & $+$85.0 & 0.007 & 3.34 \\
Gemma 3 4B   & 50 & $+$63.1 & 0.004 & 1.68 \\
Gemma 3 4B   & 50\textsuperscript{c} & $+$65.0 & 0.004 & 1.73 \\
\bottomrule
\end{tabular}
\end{table}

\subsection{Bootstrap Confidence Intervals}
\label{app:bootstrap-ci}

\begin{table}[t]
\centering
\small
\setlength{\tabcolsep}{3pt}
\caption{Bootstrap 95\% CIs for key deltas (1{,}000 resamples, seed=42).}
\label{tab:bootstrap-ci}
\begin{tabular}{@{}llrrl@{}}
\toprule
\textbf{Scenario} & \textbf{Model} & \textbf{$\Delta$pp} & \textbf{95\% CI} & \textbf{$p$} \\
\midrule
Sc.\,A (16) & Claude S.\,4 & $+$10.0 & [0.6, 18.3] & 0.200 \\
Sc.\,A (16) & GPT-5.2 & $+$29.7 & [16.1, 42.2] & 0.004 \\
Sc.\,B (43) & Claude S.\,4 & $+$5.3 & [$-$0.3, 11.3] & 0.063 \\
Sc.\,D (50) & Phi-4 14B & $+$90.3 & --- & $<$0.001 \\
Sc.\,D (50) & Gemma 3 4B & $+$63.1 & [45.0, 80.0] & $<$0.001 \\
GSM8K (50) & Claude S.\,4 & $+$4.5 & [0.5, 8.5] & 0.020 \\
\bottomrule
\end{tabular}
\end{table}

\section{11-Model Ablation}
\label{app:ablation}
\label{sec:extended-findings} 

\begin{table*}[t]
\centering
\small
\begin{tabular}{@{}l rr rr l@{}}
\toprule
\textbf{Disabled} & \multicolumn{2}{c}{\textbf{$\Delta$ Acc.\,(pp)}} & \multicolumn{2}{c}{\textbf{$\Delta$ Tokens}} & \textbf{Role} \\
\cmidrule(lr){2-3} \cmidrule(lr){4-5}
\textbf{Operator} & \textbf{16\,t} & \textbf{43\,t} & \textbf{16\,t} & \textbf{43\,t} & \\
\midrule
$-$CAS   & $-$1.4 & $-$2.3 & $-$2    & 0      & Accuracy driver \\
$-$SDM   & $-$0.5 & $-$1.8 & $+$1    & $-$28  & Schema-dependent \\
$-$TAS   & $-$0.9 & $+$0.5 & $+$1    & $-$4   & Mixed (scale-sensitive) \\
$-$SAD-F & $+$0.5 & $-$0.5 & $-$9    & $-$20  & Negligible \\
$-$CCP   & $\pm$0.0  & $\pm$0.0  & $-$85   & $-$306 & No effect (token overhead) \\
$-$CFO   & $-$0.5 & $+$1.4 & 0       & 0      & Hurts at ${\geq}$43\,t \\
$-$CFL   & $\pm$0.0  & $+$0.9 & $-$6    & $-$6   & Hurts at ${\geq}$43\,t \\
$-$DRO   & $+$0.5 & $+$0.5 & $+$161  & $+$611 & Token engine (95\% of savings) \\
\bottomrule
\end{tabular}
\caption{Leave-one-out ablation averaged across 11 models (9 local 4B--32B + GPT-4o + GPT-5.2), 20 tasks per condition. Baselines: 65.0\% at 16 tools, 61.8\% at 43 tools. $\Delta$~Acc.: accuracy change when operator is removed (negative = operator helps). $\Delta$~Tokens: absolute token change. CAS is the only operator with consistent cross-model accuracy benefit. DRO drives 95\% of token savings but has no accuracy effect. CFL and CFO become counterproductive at ${\geq}$43 tools. Claude Opus~4.7 (12th model, not averaged) confirms the ranking; its unique CCP sensitivity ($-$5\,pp) is discussed in \S\ref{sec:extended-findings}.}
\label{tab:ablation}
\end{table*}

\paragraph{Opus 4.7.}
Claude Opus~4.7 emerges as the only operator-HUNGRY archetype in our study: every operator contributes positively, with CCP alone delivering $+$20\,pp improvement and CFL+CFO synergizing super-additively ($+$17.5\,pp vs.\ expected $+$12.5\,pp). The full 8-operator pipeline is optimal for Opus-class models.

\end{document}